\documentclass{article}

\usepackage{microtype}
\usepackage{graphicx}
\usepackage{subcaption}
\usepackage{booktabs} %

\usepackage{hyperref}

 \usepackage[preprint]{icml2026}

\usepackage{color}
\usepackage{amsmath}
\usepackage{amssymb}
\usepackage{mathtools}
\usepackage{amsthm}
\usepackage{nicefrac}
\usepackage[linesnumbered,ruled,vlined,noend,algo2e]{algorithm2e}
\usepackage[capitalize,noabbrev]{cleveref}

\theoremstyle{plain}
\newtheorem{theorem}{Theorem}[section]
\newtheorem{proposition}[theorem]{Proposition}
\newtheorem{lemma}[theorem]{Lemma}

\newtheorem{definition}[theorem]{Definition}

\newtheorem{remark}[theorem]{Remark}

\newcommand{\mindist}{d_{\rm min}}
\newcommand{\maxdist}{d_{\rm max}}
\newcommand{\dmin}{\mindist}
\newcommand{\dmax}{\maxdist}
\newcommand{\tarr}{t_{\rm arr}}
\newcommand{\tdel}{t_{\rm del}}
\renewcommand{\epsilon}{\varepsilon}
\newcommand{\eps}{\varepsilon}
\DeclareMathOperator*{\argmax}{argmax} %
\DeclareMathOperator*{\argmin}{argmin} %

\newcommand{\BO}[1]{O\!\left(#1\right)}

\newcommand{\delset}{\mathcal{D}}
\newcommand{\Ag}{\ensuremath{A^\gamma}}

\newcommand{\Rg}{\ensuremath{R^\gamma}}
\newcommand{\repg}{\ensuremath{rep^\gamma}}
\newcommand{\Let}[2]{#1 $\leftarrow$ #2}
\newcommand{\Agt}[1][]{\ensuremath{A^{\gamma, #1}}}
\newcommand{\Rgt}[1][]{\ensuremath{R^{\gamma, #1}}}
\newcommand{\Clu}{{D}}

\usepackage[textsize=tiny]{todonotes}

\icmltitlerunning{Dynamic $k$-center clustering with lifetimes}

\begin{document}

\twocolumn[
  \icmltitle{Dynamic $k$-center clustering with lifetimes}

  \icmlsetsymbol{equal}{*}

  \begin{icmlauthorlist}
    \icmlauthor{Simone Moretti}{pd,equal}
    \icmlauthor{Paolo Pellizzoni}{mpi,equal}
    \icmlauthor{Andrea Pietracaprina}{pd,equal}
    \icmlauthor{Geppino Pucci}{pd,equal}
  \end{icmlauthorlist}

  \icmlaffiliation{pd}{University of Padova, Italy}
  \icmlaffiliation{mpi}{Max Planck Institute of Biochemistry, Germany}
  
  \icmlcorrespondingauthor{A.P.}{andrea.pietracaprina@unipd.it}
  \icmlcorrespondingauthor{G.P.}{geppino.pucci@unipd.it}

  \icmlkeywords{clustering, k-center, fully dynamic, sliding windows}

  \vskip 0.3in
]

 \printAffiliationsAndNotice{\icmlEqualContribution}

\begin{abstract}
The $k$-center problem is a fundamental clustering variant with applications in learning systems and data summarization.
In several real-world scenarios, the dataset to be clustered is not static, but evolves over time, as new data points arrive and old ones become stale. To account for dynamicity, the $k$-center problem has been mainly studied under the sliding window setting, where only the $N$ most recent points are considered non-stale, or the fully dynamic setting, where arbitrary sequences of point arrivals and deletions without prior notice may occur. In this paper, we introduce the \textit{dynamic setting with lifetimes}, which bridges the two aforementioned classical settings by still allowing arbitrary arrivals and deletions, but making the deletion time of each point known upon its arrival. 
Under this new setting, we devise a deterministic $(2+\varepsilon)$-approximation algorithm with $\widetilde{O}(k/\varepsilon)$ amortized update time and memory usage linear in the number of currently active points. Moreover, we develop a deterministic $(6+\varepsilon)$-approximation algorithm that under ``tame'' update sequences has $\widetilde{O}(k/\varepsilon)$ worst-case update time and heavily sublinear working memory.
\end{abstract}

\section{Introduction}

Clustering, which is used to summarize and to group together data, is one of the most widely studied primitives in machine learning and data mining \citep{HennigMMR15}. A common clustering task is $k$-center clustering, where, given a set of data points $X$ from a metric space $(\mathcal U, d)$, one needs to select $k$ centers from $X$ such that the maximum distance from any data point at hand to its closest center is minimized. This problem is well-known to be NP-hard \citep{Gonzalez85}, so approximation algorithms are sought. 

Motivated by many real-world scenarios, recent years have witnessed a 
growing interest in devising dynamic clustering algorithms \citep{ChanGS2018, bhattacharya2024fully, Cohen-AddadSS16}, where the set of data points is not fixed and static, but rather changes dynamically, allowing new points to be inserted and some points to be deleted. For example, the need for clustering dynamic data may arise for tracking stored products in a warehouse \citep{HichamAAO24, SrinivasanM99},
for analyzing the behaviors of active users in a web community \citep{MorrisonMHH21}, or studying evolving populations \citep{Gajer12}. %

Dynamic clustering focuses on the analysis of (unbounded) sequences of updates (i.e., insertions or deletions of points), focusing, at any time, on the instance defined by the \emph{active set} of points, namely, those that have been inserted and not yet deleted. 
The two main dynamic settings that allow both insertions and deletions are the \textit{sliding window} setting and the \textit{fully dynamic} setting.
In the sliding window setting, the active set consists of the $N$ most recent points, that is, whenever a new point is inserted, the oldest one is deleted  \citep{Cohen-AddadSS16, BorassiELVZ20}. The focus is on designing algorithms which use $o(N)$ memory, hence avoiding storing the entire active set. In the fully dynamic setting, both insertions and deletions are arbitrary: at any time, a new point may be inserted or an active point may be deleted  \citep{ChanGS2018, bhattacharya2024fully}. This latter setting is more general and therefore more challenging than the former, but known approaches typically allow memory usage linear in the size of the active set, shifting the focus to approximation quality and fast update times.

In this paper,  we introduce a third setting, offering a middle-ground between the aforementioned two, which we refer to as the \textit{dynamic setting with lifetimes}. 
In this setting, when a point is inserted, an arbitrary lifetime is also specified, and the point will be deleted as soon as the lifetime expires. 
This is more general than the sliding window setting, where the lifetimes are equal and fixed for all points, but it is less general than the fully dynamic one, where deletion times are not known in advance. 

The dynamic setting with lifetimes %
has several relevant real-world applications, including the aforementioned ones.
For example, items might arrive at a warehouse with a pre-assigned specific departure time, and  subscriptions to a web community might have known expiration dates.

\paragraph{Our contribution}
In this paper, we tackle the $k$-center problem under the dynamic setting with lifetimes. In particular, we introduce:
\begin{itemize}
    \item a deterministic $(2+\eps)$-approximation algorithm with $\widetilde{O}(k/\eps)$ amortized update time and $\widetilde{O}(|X|/\eps)$ memory;
    \item a deterministic $(6+\eps)$-approximation algorithm that under tame update sequences (see Definition~\ref{def:h-bounded}) has $\widetilde{O}(k/\eps)$ worst-case update time and $\widetilde{O}(k/\eps)$ memory. %
\end{itemize}
A thorough comparison of our algorithms with previous work is carried out in Subsections~\ref{sec:comparison-6a} and \ref{sec:comparison-6}, showing that 
the knowledge of lifetimes yields improved performance guarantees over the state-of-the-art algorithms in either the fully dynamic or sliding window settings.

\paragraph{Additional related work}

Several works have tackled k-center in the fully dynamic setting \citep{chan2024fully, bateni2023optimal}, 
including focusing on variants with outliers \citep{chan2024fully, biabani2024improved}, 
on low-dimensional spaces \citep{goranci2021fully, pellizzoni2023fully, gan2024fully}, 
on minimizing recourse \citep{lkacki2024fully, bhattacharyaalmost, forster2025dynamic}, 
and on graphs \citep{cruciani2024dynamic}.
In the sliding window model, \citet{Cohen-AddadSS16} tackled the k-center problem while subsequent works addressed variants \citep{PellizzoniPP22a, WangFML23, CeccarelloPPV25}, focused on low-dimensional spaces \citep{PellizzoniPP22} or related clustering problems \citep{BorassiELVZ20}.

Dynamic settings akin to the one \textit{with lifetimes} have been addressed recently, including the \textit{deletions-look-ahead} model \citep{PengR22}, the \textit{general expiration streaming} model \citep{blank2025general}, and models with predicted deletion times \citep{LiuV24}, see also Appendix~\ref{app:akin}.

\paragraph{Organization of the paper} 
Section~\ref{sec:prelim} defines preliminary technical notions and formally specifies the $k$-center problem under the dynamic setting with lifetimes. Sections~\ref{sec:2approx} and \ref{sec:6approx}  present our two new algorithms and their analyses, providing, for each algorithm, a detailed comparison with previously known ones. Section~\ref{sec:concl} concludes with some final remarks and open problems. Due to space constraints, 
additional considerations on previous and future work, some pseudocodes and proofs are deferred to the appendix. 

\section{Preliminaries}
\label{sec:prelim}

\paragraph{$k$-center}
Consider a metric space $(\mathcal U,d)$ and a finite set $X \subseteq \mathcal U$. 
For any point $p \in X$ and any subset  $C \subseteq X$, 
we define $d(p,C) = \min_{q \in C} d(p,q)$,
and for any $S\subseteq X$ we denote the \emph{radius of $C$ with respect to $S$} as
\[
r_C(S) = \max_{p \in S} d(p,C).
\] 

The \emph{$k$-center} problem, for a given positive integer $k$,  requires
to find a subset $C \subseteq X$ such that $|C| \leq k$ which minimizes
$r_C(X)$, i.e., $C^* = \arg \min_{C \subseteq X, |C| \leq k} r_C(X)$. 
We denote the radius of the optimal solution by $r_{k}^*(X) = r_{C^*}(X)$.
The points of the set $C$ are called \emph{centers}.  
Indeed, $C$ induces a partition of $X$ into $|C|$ clusters by assigning each point to its closest center (with ties
broken arbitrarily).  
In general metric spaces, the problem is NP-hard, and it is impossible to achieve an approximation factor $2-\epsilon$, for any fixed $\epsilon >0$, unless P = NP \citep{Gonzalez85}.

\paragraph{The dynamic setting with lifetimes}

Let $X$ be a dataset whose subsets represent instances of a given
optimization problem\footnote{Although we define the dynamic setting with lifetimes with respect to optimization problems, several computational problems can be cast in this setting.}. In the \emph{dynamic with lifetimes setting}, each element $p \in X$ is endowed with an \emph{arrival time} $\tarr(p) \in \mathbb{N}$ and a \emph{deletion time} $\tdel(p) \in \mathbb{N}$, with $\tarr(p) < \tdel(p)$, and we say that $p$ \emph{expires} at time $\tdel(p)$. 
We assume 
that for each time $t \geq 0$, there exists at most one element $p \in X$ with $\tarr(p) = t$. 
At any time $t \geq 0$, we say that an element $p$ is \emph{active} if
$\tarr(p) \leq t < \tdel(p)$, and define, accordingly,
the \emph{active set} $X_t$ as 
\[
X_t = \{p \in X \; : \;  \tarr(p) \leq t < \tdel(p)\}.
\]

Also, we assume that an oracle communicates the arrivals of elements from $X$, ordered by their arrival time.
An algorithm in this dynamic setting maintains a suitable representation of $X_t$, by running a primitive $\textsc{Update}(p)$ upon the arrival of a new element $p$, and, at any time $t$, is able to compute a solution to the given computational problem, with respect to instance $X_t$, by running
a primitive $\textsc{Query}(t)$.

The goals that the algorithm should pursue are:
\begin{enumerate}
\item 
Good approximation of the optimal solution for $X_t$;
\item 
Query time sublinear in $|X_t|$;
\item 
Update time (possibly amortized) sublinear in $|X_t|$;
\item Memory requirements at most linear in $|X_t|$ \\(\textit{Optionally}: memory sublinear in $|X_t|$).
\end{enumerate}

Note that the main requirements are shared with both the fully dynamic and sliding window settings. The optional one, typical of the sliding window but not of the fully dynamic setting, makes the algorithm design more challenging. 
Under this requirement, our setting is analogous to the \textit{general expiration streaming model}, which was recently introduced in \citet{blank2025general} as an extension of the streaming model, thus focusing mainly on memory requirements.

\paragraph{Dynamic $k$-center with lifetimes}
Let $X$ be a set of points with arrival and deletion times, as described above. 
The dynamic $k$-center with lifetimes problem on $X$ requires that, at any given time $t$,
a set of $k$ centers $C_t \subseteq X_t$ be determined, whose radius $r_{C_t}(X_t)$ is close to the optimal one $r_k^*(X_t)$ for $X_t$. An algorithm for this problem consists of 
an $\textsc{Update}$ procedure, which  deals with insertions and deletions, so to maintain a suitable representation of the active set $X_t$, and a  
$\textsc{Query}$ procedure, which, at any given time $t$,
returns a solution $C_t$ for $X_t$. The approximation ratio of the algorithm is the maximum of $r_{C_t}(X_t) / r_k^*(X_t)$ over any set $X_t$. 

Let $\mindist$ and $\maxdist$ be the
minimum and maximum distances between any two points of $X$, and define the \emph{aspect ratio} of $X$ as $\Delta = \maxdist/\mindist$. As standard in most previous works, our algorithms will assume that  $\mindist$ and $\maxdist$ %
are known. 

\section{A $(2 +\epsilon)$-approximation algorithm} \label{sec:2approx}
This section presents a deterministic algorithm for dynamic $k$-center with lifetimes, which attains an almost-tight $(2+\eps)$ approximation ratio, requiring memory linear in $|X_t|$ and $\BO{(\log{\Delta}/\eps) \cdot k + \max_t\log|X_t|}$ amortized update time.

The algorithm's $\textsc{Update}$ and $\textsc{Query}$ procedures, described later in this section, crucially exploit a data structure based on several guesses for the optimal radius. Specifically, since the optimal radius clearly falls in $[\dmin,\dmax]$, we define the following set of guesses:
\begin{equation}\label{eq:gamma}
\small
\Gamma = \left\{
(1+\beta)^i : 
\lfloor \log_{1+\beta} \dmin \rfloor
\leq i \leq
\lceil \log_{1+\beta} \dmax \rceil
\right\},
\end{equation}
where $\beta=\eps/2$, and $\eps>0$ is the user-defined approximation tolerance.
At a high level, for each guess $\gamma \in \Gamma$ we dynamically maintain a data structure, akin to the one used in \cite{ChanGS2018}, consisting of the following sets:
\begin{itemize}
	\item $C^\gamma$ = $(c^\gamma_1, \dots, c_{\ell^\gamma}^\gamma)$, a list of $\ell^\gamma \leq k$ centers, such that for each $1 \leq i \neq j \leq \ell^\gamma$, we have $d(c^\gamma_i, c^\gamma_j) > 2\gamma$;
	\item $\Clu^\gamma$ = $(\Clu_1^\gamma, \dots, \Clu_{\ell^\gamma}^\gamma)$, a list of $\ell^\gamma \leq k$  disjoint sets of points, such that for each
    $1 \leq i \leq \ell^\gamma$ and $x \in \Clu_i^\gamma$, we have $d(x, c_i^\gamma) \leq 2 \gamma$;
	\item $U^\gamma$, a set of unclustered points such that, for each $x \in U^\gamma$ and $1 \leq i \leq \ell^\gamma$, we have $d(x, c_i^\gamma) > 2 \gamma$. Moreover $U^\gamma \neq \emptyset$ implies $\ell^\gamma = k$.
\end{itemize}

To keep track of which points need to be deleted from the active set, we maintain a copy of all points in a priority queue $Q$ indexed on their deletion time, with ties broken by the arrival time to guarantee a unique ordering.

The $\Clu_i^\gamma$'s and $U^\gamma$'s are implemented as doubly-linked lists, and each entry in $Q$ has a pointer to the entries of the $\Clu_i^\gamma$'s and $U^\gamma$'s corresponding to the same point, so to allow these latter entries to be identified in $\BO{1}$ time. By construction, if $U^\gamma = \emptyset$, then $C^\gamma$ provides a solution for $X_t$ of radius $\leq 2 \gamma$, while, if $U^\gamma \neq \emptyset$, then $C^\gamma$
plus any point of $U^\gamma$ yields a set of $k+1$ points of $X_t$ at pairwise distance $> 2 \gamma$, which is a witness that no solution of radius $\gamma$ for $X_t$ exists.

\paragraph{Update procedure}
Procedure $\textsc{Update}_{2+\eps}(p,t)$ (see Algorithm~\ref{alg:update-2eps} for the pseudocode) is called when a new point $p$ arrives (in this case $t=\tarr(p)$), or at any time $t$ when expired points, if any, need to be deleted from the data structure (in this case, $p = \textsc{null}$).
First, for each $q \in Q$ with $\tdel(q) \leq t$, $q$ is
removed from $Q$ and procedure $\textsc{Delete}(q, \gamma)$, described later, is invoked
for each $\gamma \in \Gamma$. Then, if $p \neq \textsc{null}$, procedure $\textsc{Insert}(p, \gamma)$ is invoked for each $\gamma \in \Gamma$, and $p$ is added to the priority queue $Q$. 

\begin{algorithm2e}[h] 
\SetAlgoLined
    \While{$\min_{q\in Q} \tdel(q) \leq t$}{
        Remove $q$ from $Q$ \\ 
        \lFor{$\gamma \in \Gamma$}{$\textsc{Delete}(q, \gamma)$}
    }
    \If{$p \neq \textsc{null}$}{
    \lFor{$\gamma \in \Gamma$}{$\textsc{Insert}(p, \gamma)$}
	Insert $p$ in $Q$
    }
\caption{$\textsc{Update}_{2+\eps}(p,t)$} \label{alg:update-2eps}
\end{algorithm2e}

$\textsc{Insert}(p,\gamma)$ (see Algorithm~\ref{alg:insert} for the pseudocode) inserts $p$ in the cluster of $\Clu^\gamma$ of lowest index whose center is at distance at most $2\gamma$ from $p$. If no such cluster exists, then if $\Clu^\gamma$ comprises $\ell^\gamma < k$ clusters, then the procedure creates a new cluster by making $p$ its center, otherwise, it inserts $p$ in $U^\gamma$.
At the end of $\textsc{Insert}(p,\gamma)$ a reclustering procedure, described later, is called. %

\begin{algorithm2e}[h] 
\SetAlgoLined
    	\For{$i = 1, \dots \ell^\gamma$}{
		\If{$d(p, c_i^\gamma) \leq 2\gamma$}{
			Insert $p$ in $\Clu_i^\gamma$\\
            \textbf{goto} line~\ref{line:reclustering-insert} \\
        }
	}
	
    \If{$\ell^\gamma < k$}{
		Insert $p$ in $C^\gamma$ and in $\Clu^\gamma_{\ell^\gamma+1}$ \label{line:insert-center-addition}\\
            $\ell^\gamma \leftarrow \ell^\gamma+1$\\
	}
	\lElse{Insert $p$ in $U^\gamma$}
    $\textsc{Reclustering}(\gamma)$ \label{line:reclustering-insert}
\caption{$\textsc{Insert}(p, \gamma)$} \label{alg:insert}
\end{algorithm2e}

$\textsc{Delete}(p,\gamma)$, where $p$ is an expired point, works as follows (see Algorithm~\ref{alg:delete} for the pseudocode). If $p$ is not a center, it is simply removed from its cluster in $\Clu^\gamma$. If $p$ is a center $c_i^\gamma$, the procedure deletes $c_i^\gamma$ and tries to recluster all points of $\Clu_i^\gamma$ in clusters with index $>i$. Intuitively, each of these points is inserted in the cluster where it would have been inserted if $c_i^\gamma$ had not existed. Then, all cluster indexes greater than $i$ are decremented by 1, making them contiguous again, and, if $U^\gamma \neq \emptyset$, the point of $U^\gamma$ with the latest deletion time becomes the center of a new cluster. Finally, the aforementioned reclustering procedure is called.

\begin{algorithm2e}[t] 
\SetAlgoLined
    \If{$p \not\in C^\gamma$}{
		Remove $p$ from $U^\gamma$ or $\Clu^\gamma$ \\
        \textbf{goto} line~\ref{line:reclustering-delete} \\
    }
	Let $i$ be such that $p = c_i^\gamma$ \\ 
    \For{$x\in \Clu_i^\gamma$ {\rm with} $x \neq p$}{ \label{line:deletion-reclustering-for}
        \For{$j = i+1,\dots, \ell^\gamma$}{
            \If{$d(x,c_j^\gamma)\leq2\gamma$}{
                Insert $x$ in $\Clu_j^\gamma$\;
                \textbf{break}\label{line:del_insert_in_C_j}}
        }
        \If{$x$ was not inserted in the above for loop}{
            \If{$\ell^\gamma < k$}{
    		  Insert $x$ in $C^\gamma$ and in $\Clu_{\ell^\gamma+1}^\gamma$\; 
              $\ell^\gamma\leftarrow \ell^\gamma+1$ \label{line:delete-center-addition}}
        	\lElse{Insert $x$ in $U^\gamma$\label{line:delete-insU}}
        }
    }
    Remove $c_i^\gamma$ and $\Clu_i^\gamma$\;
    $\ell^\gamma \leftarrow \ell^\gamma-1$\;
    Rename center and cluster indexes to be contiguous\label{line:del_relabeling}\\
    \If{$U^\gamma \neq \emptyset$}{ \label{line:del_create_cluster}
        $\ell^\gamma \leftarrow \ell^\gamma+1$\;
        Pick $u\in U^\gamma$ s.t. $\tdel(u)$ is maximal\;
        Remove $u$ from $U^\gamma$ and insert $u$ in $C^\gamma$ as $c_{\ell^\gamma}^\gamma$\;
        \For{$x\in U^\gamma$}{
            \If{$d(x,u)\leq 2\gamma$}{
                Remove $x$ from $U^\gamma$ and insert $x$ in $\Clu_{\ell^\gamma}^\gamma$ \label{line:del_insert_in_C_l}
            }
        }
    }
	  $\textsc{Reclustering}(\gamma)$ \label{line:reclustering-delete}
\caption{$\textsc{Delete}(p, \gamma)$} \label{alg:delete}
\end{algorithm2e}

Note that the expensive part of the deletion procedure, namely, re-assigning points and creating a new cluster, only happens if a cluster center is deleted. In order to limit these costly operations, we aim at selecting centers with largest deletion times. In particular, this is enforced when a new center from $U^\gamma$ must be selected (Lines~\ref{line:del_create_cluster}$\div$\ref{line:del_insert_in_C_l} of $\textsc{Delete}(p, \gamma)$), and when the reclustering is performed at the end of 
$\textsc{Insert}$ and $\textsc{Delete}$, which is described next. 

For a given $\gamma \in \Gamma$, we define as \textit{persistent} those clustered points which expire after their respective
clusters' centers, and \textit{vanishing} those which either expire earlier than their clusters' centers  or are unclustered. Note that only persistent points can change cluster because of the deletion of their center, as vanishing points expire before the deletion of their center. 
We partition each cluster $\Clu^\gamma_i$ into persistent and vanishing points as $P_i^\gamma := \{x\in \Clu^\gamma_i \colon \ \tdel(c^\gamma_i) < \tdel(x)\}$ and $V_i^\gamma := \{x\in \Clu^\gamma_i \colon \ \tdel(c^\gamma_i) \geq \tdel(x)\}$.
In practice, we maintain two counters per cluster for $|P_i^\gamma|$ and $|V_i^\gamma|$, which can be updated in $\BO{1}$ time for each insertion and deletion. 

Procedure $\textsc{Reclustering}(\gamma)$ (see Algorithm~\ref{alg:reclustering} for the pseudocode), first searches for the smallest $i$ such that 
\vspace{-0.15cm}
\begin{equation} \label{eq:size_balance}
    \sum_{j=i}^{\ell^\gamma} |P_j^\gamma| > |U^\gamma| + \sum_{j=i}^{\ell^\gamma}|V_j^\gamma|.
\end{equation}
\vspace{-0.4cm}

If such $i$ exists, then it reclusters the points of 
$U^\gamma \cup (\cup_{j = i}^{\ell^\gamma} \Clu_j^\gamma)$ into 
$\ell^\gamma-i+1$ clusters,
selecting, for each such new cluster, a center with latest deletion time. 
Note that, after reclustering, some points might remain unclustered, but 
the newly created clusters comprise only vanishing points.
\begin{algorithm2e}[t] 
\SetAlgoLined
$i \leftarrow $ min index $h$ s.t.
$\sum_{j=h}^{\ell^\gamma} |P_j^\gamma| > |U^\gamma| + \sum_{j=h}^{\ell^\gamma}|V_j^\gamma|$\;
\lIf{no such $i$ exists}{\Return}
$U' \leftarrow U^\gamma \cup (\cup_{j = i}^{\ell^\gamma} \Clu_j^\gamma)$ \label{line:reclustering_cluster_joining}\\
    Remove centers $c_j^\gamma$ and clusters $\Clu_j^\gamma$ $\forall j \geq i$\;
    \For{$j = i, \dots, k$}{
        \If{$U' = \emptyset$}{
        $\ell^\gamma \leftarrow j-1$\;
        \textbf{break}
        }
        Pick $u\in U'$ s.t. $\tdel(u)$ is maximal\;
        Remove $u$ from $U'$ and insert $u$ in $C^\gamma$ as $c_j^\gamma$\;
        \For{$x\in U'$}{
            \If{$d(x,u)\leq 2\gamma$}{
                Remove $x$ from $U'$ and insert $x$ in $\Clu_j^\gamma$
            }
        }
    }
    $U^\gamma\leftarrow U'$
    
\caption{$\textsc{Reclustering}(\gamma)$} \label{alg:reclustering}
\end{algorithm2e}

The following remark %
shows that the reclustering %
is crucial to obtain sublinear amortized update time.

\begin{remark}
\label{remark:sequence}
    There exists a sequence of $\BO{n}$ calls to $\textsc{Update}_{2+\eps}$ which, without the use of $\textsc{Reclustering}$, would perform a total of $\Theta(n^2)$ operations.
\end{remark}

\textbf{Query procedure} $\ $
At any time $t$, an approximate solution to $k$-center for the active set $X_t$ can be computed calling $\textsc{Query}_{2+\eps}(t)$ (see Algorithm~\ref{alg:query-2} for the pseudocode). 
We assume that when $\textsc{Query}_{2+\eps}(t)$ is invoked, the data structure contains no expired points. 
This can be ensured by checking that the minimum deletion time of a point in $Q$ is $>t$, and, if this is not the case,  
invoking $\textsc{Update}_{2+\eps}(\textsc{null},t)$ before $\textsc{Query}_{2+\eps}(t)$.
The procedure simply
returns the set $C^{\gamma^*}$, with 
$\gamma^* := \min_{\gamma \in \Gamma} \{ \gamma \colon \ U^{\gamma} = \emptyset \}$, as a set of centers.

The following subsections analyze the approximation ratio, and
the time and memory requirements of the above algorithm. For convenience, we define a \textit{time step} as one call of \textsc{Insert} or \textsc{Delete}. Since 
 arrival times are distinct and $Q$ breaks ties for deletion times based on arrival times, we can 
assume, without loss of generality, that each call corresponds to a unique time step. For any time step $t \geq 0$, we use $X_t$ to denote the active set, and $C^{\gamma, t}, \Clu^{\gamma, t}, U^{\gamma, t}, Q^t$ to denote the various components of the data structure,
\textit{after} the $t_{\text{th}}$ call of \textsc{Insert} or \textsc{Delete}. Time step
$t=0$ refers to the initial phase of the algorithm, when nothing is yet inserted. %
We may drop $t$ %
when clear from the context.

\subsection{Approximation}

We now analyze the approximation ratio of the algorithm. First, we prove that it maintains the following invariant.

\begin{lemma}
\label{lem:invariant}
For any time step $t$, the following properties hold for each $\gamma \in \Gamma$, and $x \in X_t$:
    \begin{enumerate}
        \item 
        \label{invariant-c1}
        if $d(x, C^{\gamma, t})> 2\gamma$, then $x \in U^{\gamma, t}$; 
        \item if $d(x, C^{\gamma, t}) \leq 2\gamma$, then $x \in \Clu^{\gamma, t}_i$ with \\ $i = \argmin_{j=1, \dots \ell^{\gamma, t} }\{ c_j \in C^{\gamma, t} \colon \ d(x, c_j) \leq 2\gamma \}$.
    \end{enumerate}
    Moreover, if $U^{\gamma, t} \neq \emptyset$ then $|C^{\gamma, t}| = k$.
\end{lemma}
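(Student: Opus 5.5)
The plan is induction on the time step $t$, for a fixed $\gamma\in\Gamma$, using the invariant in a slightly strengthened form. In addition to the stated properties, I will carry the pairwise separation $d(c_i^\gamma,c_j^\gamma)>2\gamma$ for $i\neq j$. I will also carry the fact that every point stored in the structure is active, which holds because \textsc{Delete} is invoked for each expired point as it leaves $Q$. The base case $t=0$ is trivial, since everything is empty. For the inductive step I will treat the three procedures that can modify the structure separately: \textsc{Insert}, \textsc{Delete}, and the final \textsc{Reclustering}. Each will be shown to map a configuration satisfying the invariant to another one satisfying it. Property~2 is stated through the ``lowest index'' rule, so I will rephrase it as follows: $x\in\Clu_i^\gamma$ iff $d(x,c_i^\gamma)\le 2\gamma$ and $d(x,c_j^\gamma)>2\gamma$ for all $j<i$. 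Property~1 says that the points at distance $>2\gamma$ from every center are exactly those in $U^\gamma$.

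For \textsc{Insert}$(p,\gamma)$, the existing points are untouched. A new center is appended only at the last index $\ell^\gamma+1$, and only when $p$ is farther than $2\gamma$ from all centers. Hence the argmin for any old clustered point is unchanged, and separation is preserved. Old points of $U^\gamma$ can exist only when $\ell^\gamma=k$, in which case no center is added and they stay correctly in $U^\gamma$. The new point $p$ is placed exactly according to the rule. For \textsc{Delete}$(p,\gamma)$ with $p$ not a center, removing $p$ changes nothing else; the only thing to check is the condition ``$U\neq\emptyset\Rightarrow|C|=k$'', which removal of a point cannot break. If $p=c_i^\gamma$, I will argue in three steps.
\begin{itemize}
\item Points in clusters $j<i$ keep their argmin, and so do points in clusters $j>i$, since their earlier candidate $c_i$ was already too far. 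They only get relabeled $j\mapsto j-1$.
\item Each $x\in\Clu_i^\gamma$ is scanned against $c_{i+1},\dots$, which is correct because $x$ was already far from all $c_j$ with $j<i$. If no center fits, $x$ becomes a new last center (it is then far from all others) or goes to $U$ when $\ell=k$.
\item One subtle point is that a point $x$ of $\Clu_i^\gamma$ turned into a new center during the loop may capture later points of $\Clu_i^\gamma$. This is consistent, because the new center is appended last and the loop then scans it too. Old points of $U$ are far from $x$, since otherwise they would violate the invariant? No: old $U$ points are far from old centers but may be near the new $x$. That case cannot arise, however: a new center is only created when $\ell<k$, which by the invariant means $U=\emptyset$ before the deletion.
\end{itemize}
After the loop, the refill step takes $u\in U$ (which is far from all remaining centers) as the new last center and moves into its cluster exactly the $U$-points within $2\gamma$. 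Points left in $U$ remain far from all centers. If the refill occurs, $|C|$ goes back to $k$: before the deletion $|C|=k$ if $U$ was nonempty, and if $U$ became nonempty during the loop then $\ell$ had reached $k$ at that point.

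For \textsc{Reclustering}, clusters with index $<i$ are untouched, so their points keep their argmin. The points of $U'$ were all far from $c_1,\dots,c_{i-1}$, either by membership in a later cluster or by property~1. The greedy loop appends centers $c_i,c_{i+1},\dots$, each chosen from $U'$ and hence far from all previous centers. It assigns to $c_j$ exactly the remaining points within $2\gamma$, which is precisely the lowest-index rule. It stops either because $U'$ is empty, in which case $U=\emptyset$, or because $j$ reaches $k$, in which case $|C|=k$ whenever leftovers exist. The leftovers are far from all centers.

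The main obstacle I expect is the center-deletion case: the interleaving of on-the-fly new centers with the reassignment of $\Clu_i^\gamma$, the index relabeling, and the bookkeeping showing that ``$U\neq\emptyset\Rightarrow|C|=k$'' survives. The case split on whether $U^\gamma$ was empty before the deletion will handle this.
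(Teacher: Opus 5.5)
Your proposal is correct and follows essentially the same route as the paper's proof. Both argue by induction on time steps, with the same case split: \textsc{Insert}; \textsc{Delete} of a non-center versus a center, where the points of $D_i^\gamma$ are reassigned using the fact that they were already far from $c_1^\gamma,\dots,c_{i-1}^\gamma$, followed by the refill from $U^\gamma$; and the final \textsc{Reclustering}. You are in fact more careful than the paper: you explicitly carry the pairwise separation of centers (needed for a center to lie in its own cluster under the lowest-index rule), and you check that centers created during a center deletion cannot strand old points of $U^\gamma$, because $\ell^\gamma<k$ forces $U^\gamma=\emptyset$.
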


Based on the previous lemma, we can prove the approximation guarantee for
$\textsc{Query}_{2+\eps}$.

\begin{theorem}
    For any time step $t$, the solution returned by $\textsc{Query}_{2+\eps}(t)$ is a $(2+\epsilon)$-approximation for the $k$-center problem on the active set $X_t$.
\end{theorem}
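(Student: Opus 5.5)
The plan is the standard guess-and-witness argument, built on Lemma~\ref{lem:invariant}. First, $\gamma^*$ is well defined. Take the largest guess $\gamma_{\max}=(1+\beta)^{\lceil \log_{1+\beta}\dmax\rceil}\geq \dmax$. If $C^{\gamma_{\max}}$ is nonempty, every $x\in X_t$ satisfies $d(x,C^{\gamma_{\max}})\leq \dmax \leq 2\gamma_{\max}$. By property~2 of Lemma~\ref{lem:invariant}, $x$ is then clustered, so $U^{\gamma_{\max}}=\emptyset$. If $C^{\gamma_{\max}}$ is empty, the ``moreover'' part of Lemma~\ref{lem:invariant} (with $k\geq 1$) forces $U^{\gamma_{\max}}=\emptyset$ as well.

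\emph{Upper bound.} Since $U^{\gamma^*}=\emptyset$, property~1 of Lemma~\ref{lem:invariant} gives $d(x,C^{\gamma^*})\leq 2\gamma^*$ for every $x\in X_t$. The query assumes the structure holds no expired points, so $C^{\gamma^*}\subseteq X_t$, and $|C^{\gamma^*}|\leq k$. Hence $r_{C^{\gamma^*}}(X_t)\leq 2\gamma^*$.

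\emph{Lower bound.} We show $\gamma^*\leq (1+\beta)\, r_k^*(X_t)$; together with the upper bound this gives radius at most $2(1+\beta)r^*_k(X_t)=(2+\eps)r^*_k(X_t)$, since $\beta=\eps/2$.

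Trivial cases come first. If $|X_t|\leq k$, then $r^*_k(X_t)=0$, and every point of $X_t$ is itself a center in $C^{\gamma}$ for all $\gamma$. Indeed, points at distance $\leq 2\gamma$ from a center are clustered, and the others become centers because $\ell^\gamma<k$. So the radius is $0$, though this case should be checked directly against the invariant. Otherwise $r^*_k(X_t)\geq \dmin/2$.

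Let $\gamma'$ be the smallest guess with $\gamma'\geq r^*_k(X_t)$. The lowest guess is $\leq \dmin$, and I will need to handle the boundary case where $r^*\geq \dmin/2$ falls below the first guess, i.e.\ check that the index range in~\eqref{eq:gamma} suffices or adjust the constant. With that settled, $\gamma'\leq (1+\beta)r^*_k(X_t)$.

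Now suppose $U^{\gamma'}\neq\emptyset$. By Lemma~\ref{lem:invariant}, $|C^{\gamma'}|=k$. Any $u\in U^{\gamma'}$ is at distance $>2\gamma'$ from all centers, by property~2 (a point within $2\gamma'$ of a center would be clustered). The centers are pairwise at distance $>2\gamma'$, which the algorithm maintains. This gives $k+1$ points of $X_t$ at pairwise distance $>2\gamma'\geq 2r^*_k(X_t)$. By pigeonhole, two of them share an optimal center, so their distance is at most $2r^*_k(X_t)$, a contradiction. Hence $U^{\gamma'}=\emptyset$ and $\gamma^*\leq\gamma'$.

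The main obstacle is the pairwise separation of centers by more than $2\gamma$. Lemma~\ref{lem:invariant} does not state it explicitly, so I would derive it from the structural definition of $C^\gamma$ or from the invariant. Concretely, when a point becomes a center it was at distance $>2\gamma$ from all existing centers: in \textsc{Insert} and in \textsc{Delete} this holds by property~1, and in \textsc{Reclustering} because each new center is drawn from the remaining unassigned points. Centers that persist keep this property.
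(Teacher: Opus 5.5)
Your main argument is correct and is the paper's witness argument in contrapositive form. The paper takes the guess $\gamma'=\gamma^*/(1+\beta)$ just below $\gamma^*$, uses $U^{\gamma'}\neq\emptyset$ to exhibit $k+1$ points at pairwise distance $>2\gamma'$, and concludes $r_k^*(X_t)>\gamma'$. You instead take the smallest guess $\gamma'\geq r_k^*(X_t)$ and use the same pigeonhole to show $U^{\gamma'}=\emptyset$. Your checks that $\gamma^*$ exists and that centers are pairwise more than $2\gamma$ apart are things the paper takes ``by construction''.

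There are two loose ends. First, the boundary case you leave open closes immediately with the right lower bound. If $|X_t|>k$, any $k$ centers chosen from $X_t$ leave some point of $X_t$ that is not a center, and it is at distance $\geq\dmin$ from every center. So $r_k^*(X_t)\geq\dmin$, not merely $\dmin/2$. The smallest guess is $\leq\dmin$ and the largest is $\geq\dmax\geq r_k^*(X_t)$, so your $\gamma'$ exists and satisfies $\gamma'\leq(1+\beta)r_k^*(X_t)$. The paper's terse ``if $\gamma^*=\dmin$ the theorem trivially holds'' rests on this same fact.

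Second, your treatment of $|X_t|\leq k$ is wrong as written. Your own sentence shows the problem: a point within $2\gamma$ of an existing center is clustered rather than promoted, so $C^\gamma$ need not contain all of $X_t$. Concretely, the smallest guess $\gamma_0=(1+\beta)^{\lfloor\log_{1+\beta}\dmin\rfloor}$ satisfies $2\gamma_0>2\dmin/(1+\beta)\geq\dmin$ when $\eps\leq 2$. Take $k=2$ and two active points at distance $\dmin$. Then the second point is clustered with the first at every guess, and every $U^\gamma$ is empty because $k+1$ active points would be needed. The query returns a single center of radius $\dmin$, while $r_k^*(X_t)=0$. So the statement only holds under the tacit assumption $|X_t|>k$, and the paper's proof skips this case too. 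You should either state that assumption or have the query return $X_t$ whenever $|X_t|\leq k$ (detectable via $|Q|$), rather than claim radius $0$.
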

\begin{proof}
Recall that the algorithm returns $C^{\gamma^*, t}$, with
$\gamma^* = \min_{\gamma \in \Gamma} \{ \gamma \colon \ U^{\gamma, t} = \emptyset \}$. If $\gamma^* = \dmin$, the theorem trivially holds. Suppose $\gamma^* > \dmin$,
and let $\gamma' = \gamma^*/(1+\beta) \in \Gamma$. 
Clearly, $U^{\gamma', t} \neq \emptyset$, hence $|C^{\gamma', t}| = k$. Let $W = C^{\gamma', t} \cup \{ x \}$,
for an arbitrary $x \in U^{\gamma', t}$. 
By construction and by Lemma~\ref{lem:invariant} we have
that
$\forall p \neq q \in W$, $d(p, q) > 2\gamma'$. Moreover, since $|W|=k+1$, we have that $r_k^*(X_t) \geq \min_{p \neq q \in W} d(p, q)/2 > \gamma'= \gamma^*/(1+\beta)$, as at least two distinct points of $W$ fall into the same optimal cluster. By
Lemma~\ref{lem:invariant}, 
$\max_{x \in X_t} d(x, C^{\gamma^*, t}) \leq 2 \gamma^* < 2 (1+\beta) \cdot r_k^*(X_t) \leq (2+\eps) \cdot r_k^*(X_t)$,
since $\beta = \epsilon/2$.
\end{proof}

\subsection{Running time and memory complexity}

Consider a sequence of $n$ consecutive calls to the $\textsc{Update}_{2+\eps}$ procedure, which must involve $\Theta(n)$ insertions and $O(n)$ deletions. 
In what follows, we analyze the performance of our algorithm relatively to these calls.

Let us focus on an arbitrary guess $\gamma \in \Gamma$. We first analyze the overall cost of the reclustering operations.
\begin{lemma}\label{lem:remain_vanishing}
If a point $x \in X$ is vanishing for $\gamma$ at time $t_0$, then $x$ remains vanishing $\forall t \colon \ t_0 \leq t < t_{\rm del}(x)$. 
\end{lemma}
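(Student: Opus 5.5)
The plan is to prove the statement by induction over time steps $t \geq t_0$ with $t < \tdel(x)$. The inductive claim is that after each call to \textsc{Insert} or \textsc{Delete}, $x$ is either in $U^\gamma$ or in some cluster $\Clu_i^\gamma$ whose center satisfies $\tdel(c_i^\gamma) \geq \tdel(x)$. This is exactly the definition of vanishing. Since the status of $x$ can only change when $x$ is moved or its center changes, it suffices to examine every line of the three procedures that can move $x$ or replace its center.

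I will go through the cases in order, assuming $x$ is vanishing before the step.

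\emph{Case (a): $x$ lies in $\Clu_i^\gamma$ with $\tdel(c_i^\gamma) \geq \tdel(x)$.} Since $t < \tdel(x)$, the center $c_i^\gamma$ has not yet expired, and tie-breaking in $Q$ by arrival time makes deletion orders unique. So \textsc{Delete} is never invoked on $c_i^\gamma$, and $x$ is never moved by lines \ref{line:deletion-reclustering-for}--\ref{line:delete-insU}. Relabeling at line \ref{line:del_relabeling} changes only indices, not the center. The only remaining way $x$ can move is \textsc{Reclustering}.

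\emph{Case (b): $x \in U^\gamma$.} Here $x$ can leave $U^\gamma$ only through lines \ref{line:del_create_cluster}--\ref{line:del_insert_in_C_l} of \textsc{Delete}, or through \textsc{Reclustering}. In both procedures the new center $u$ is chosen with maximal $\tdel(u)$ among the current pool, and that pool contains $x$ at the moment $u$ is picked. Hence either $x = u$ and $x$ becomes a center, or $\tdel(u) \geq \tdel(x)$.

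\emph{Case (c): \textsc{Reclustering} moves $x$.} The same argument applies iteratively. At iteration $j$, if $x$ is still in $U'$, the chosen $u$ has $\tdel(u) \geq \tdel(x)$, so if $x$ joins $\Clu_j^\gamma$ it is vanishing. If $x$ is never chosen or clustered, it ends in $U^\gamma$ and is vanishing. In every case $x$ ends either in $U^\gamma$, as a center, or in a cluster whose center expires no earlier than $x$.

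The delicate point is a vanishing $x$ that becomes a center itself. By definition, $x \in \Clu_i^\gamma$ with $c_i^\gamma = x$ gives $\tdel(c_i^\gamma) = \tdel(x)$, so $x \in V_i^\gamma$ and the claim holds. Afterwards $x$ remains vanishing as long as it is a center. It could lose center status only if it is deleted, which happens at $\tdel(x)$ and is excluded, or if \textsc{Reclustering} dissolves its cluster, which is handled by case (c).

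I expect the main obstacle to be making the \textsc{Reclustering} case airtight. The key fact is that maximality of $\tdel(u)$ is taken over the current $U'$ at each iteration, and that $x$ is still in that set whenever it is assigned. I would also state explicitly that \textsc{Insert} of a new point never moves existing points except through \textsc{Reclustering}, so it cannot change the status of $x$.
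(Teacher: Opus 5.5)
Your proposal is correct and follows essentially the same argument as the paper: an induction over the \textsc{Insert}, \textsc{Delete} and \textsc{Reclustering} operations, using three facts. A vanishing point's center cannot be deleted before $\tdel(x)$; insertions and relabelings leave $x$ and its center untouched; and whenever $x$ is reassigned out of $U^\gamma$ or $U'$, the new center is picked with maximal deletion time from a pool that still contains $x$, which is exactly the paper's auxiliary lemma on \textsc{Reclustering}. Organizing the cases by where $x$ lies rather than by procedure, and explicitly handling the case where $x$ itself becomes a center (which the paper glosses over), is only a cosmetic difference.
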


\begin{proposition}\label{prop:reclustering-bound}
    The total number of operations made by all calls of $\textsc{Reclustering}(\gamma)$ is $\BO{n \cdot k}$.
\end{proposition}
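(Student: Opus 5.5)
A potential-function (amortization) argument over the vanishing points seems the right route. Fix $\gamma$. First I bound the cost of a single call of $\textsc{Reclustering}(\gamma)$ that actually reclusters from index $i$. Let $S = U^\gamma \cup \bigcup_{j\ge i} \Clu_j^\gamma$. Finding $i$ costs $\BO{k}$ using the per-cluster counters, scanning suffix sums from $\ell^\gamma$ downwards. The rebuilding loop creates at most $k-i+1$ clusters, and each iteration scans the remaining $U'$ once, so its cost is $\BO{k\cdot|S|}$. Calls that find no index $i$ cost $\BO{k}$. There are $\BO{n}$ calls to $\textsc{Insert}$ and $\textsc{Delete}$ per guess, so these trivial calls contribute $\BO{nk}$ in total. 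It remains to bound $\sum |S|$ over the nontrivial calls by $\BO{n}$.

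For the nontrivial calls I use the choice of $i$. Condition~\eqref{eq:size_balance} gives $\sum_{j\ge i}|P_j^\gamma| > |U^\gamma| + \sum_{j\ge i}|V_j^\gamma|$. Hence at least half of $S$ consists of points that are persistent before the call: $|S| < 2\sum_{j\ge i}|P_j^\gamma|$. After the call, every point of $S$ is either unclustered, or sits in a new cluster whose center has the maximum deletion time among the remaining points of $U'$ and therefore does not expire before it. So every point of $S$ is vanishing after the call. The charging scheme is then: each nontrivial reclustering charges $\BO{k}$ to each point that turns from persistent to vanishing. By Lemma~\ref{lem:remain_vanishing}, a vanishing point stays vanishing until it expires. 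Each point can therefore be converted from persistent to vanishing by reclustering at most once during its lifetime, as long as it never becomes persistent again.

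The main obstacle is the step where a vanishing point might become persistent again. Lemma~\ref{lem:remain_vanishing} already states that it cannot, but I must check that every way of entering a cluster respects this.
\begin{itemize}
\item A point of $U^\gamma$ promoted to center, together with the points it absorbs, in line~\ref{line:del_create_cluster} of $\textsc{Delete}$, is fine because the center has the maximal deletion time in $U^\gamma$.
\item A point of $\Clu_i^\gamma$ moved to a later cluster when its center dies was persistent in $\Clu_i^\gamma$ (otherwise it would have expired first), so it is not a vanishing point becoming persistent.
\end{itemize}
Hence persistent status arises only for newly inserted points (via $\textsc{Insert}$) or for points that were already persistent (in $\textsc{Delete}$). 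Either way, the number of persistent-to-vanishing transitions over the whole sequence is at most the number of insertions plus the number of transitions made by still-persistent points that were re-clustered by center deletions. The latter are still the same persistent point, so each point is counted once. That gives at most $n$ transitions in total, so the overall reclustering cost is $\BO{k}\cdot\BO{n} + \BO{nk} = \BO{nk}$.
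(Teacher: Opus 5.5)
Your proposal is correct and follows essentially the same route as the paper. You bound a nontrivial call by $O(k\,|U'|)$, use condition~\eqref{eq:size_balance} to get $|U'| < 2\sum_{j\ge i}|P_j^\gamma|$, argue directly that every reclustered point becomes vanishing (the paper's Lemma~\ref{lem:reclustering_makes_all_vanishing}), and invoke Lemma~\ref{lem:remain_vanishing} to charge each point at most once, for $O(nk)$ in total. Your separate $O(k)$ accounting for locating $i$ and for the $O(n)$ calls that do nothing matches the $O(k(|U'|+1))$ per-call bound and the ``at most $2n$ invocations'' remark in the paper's sketch.
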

\begin{proof}[Proof sketch]
    The complexity of a single call is $O(k(|U'|+1))$. Note that the procedure is invoked at most $2n$ times. Moreover, the size of $U'$ is at most twice the number of persistent points involved in the reclustering. By construction, all such persistent points will become vanishing at the end of the reclustering, and, by Lemma~\ref{lem:remain_vanishing}, they will remain vanishing until they are deleted. The total number of persistent points in all reclusterings is then bounded by $n$, and thus the total number of operations made by all calls of $\textsc{Reclustering}(\gamma)$ is $O(n\cdot k)$.
\end{proof}

We now estimate the overall costs of insertions and deletions. For insertions, the following proposition
holds because of Proposition~\ref{prop:reclustering-bound}
and because, besides the call to $\textsc{Reclustering}(\gamma)$,
$\textsc{Insert}(p,\gamma)$ takes $\BO{k}$ time. 
\begin{proposition} \label{prop:insertion-bound}
    The total number of operations made by 
    all calls of $\textsc{Insert}(p,\gamma)$ is $O(n\cdot k)$. 
\end{proposition}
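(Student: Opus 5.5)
The plan is to split the cost of all calls of $\textsc{Insert}(p,\gamma)$ into two parts: the work done in the body of the procedure (Lines 1--\ref{line:insert-center-addition} of Algorithm~\ref{alg:insert}), and the work done by the final call to $\textsc{Reclustering}(\gamma)$ at Line~\ref{line:reclustering-insert}.

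For the first part, I argue that each call costs $\BO{k}$. The for loop scans at most $\ell^\gamma \le k$ centers and does one distance evaluation and one comparison per center. Inserting $p$ into a doubly-linked list, either $\Clu_i^\gamma$, $C^\gamma$ together with a new $\Clu^\gamma_{\ell^\gamma+1}$, or $U^\gamma$, takes $\BO{1}$. So do setting the pointer from the entry of $p$ in $Q$ to this list entry and updating the counter $|P_i^\gamma|$ or $|V_i^\gamma|$, which only requires comparing $\tdel(p)$ with $\tdel(c_i^\gamma)$.

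The number of calls of $\textsc{Insert}(p,\gamma)$ equals the number of insertions among the $n$ calls of $\textsc{Update}_{2+\eps}$, which is at most $n$, since each update inserts at most one point. Summing over all calls, the non-reclustering work is $\BO{n\cdot k}$.

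For the second part, I invoke Proposition~\ref{prop:reclustering-bound}. It bounds by $\BO{n\cdot k}$ the total work of \emph{all} calls of $\textsc{Reclustering}(\gamma)$, whether they are issued from $\textsc{Insert}$ or from $\textsc{Delete}$. In particular, it bounds the calls issued from $\textsc{Insert}$. Adding the two parts gives the claimed $\BO{n\cdot k}$ bound.

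The only point needing care is bookkeeping: the $\BO{1}$ claims for list insertions and counter updates must not hide any scanning. This is ensured by the doubly-linked list implementation and the per-cluster counters described in the paper. No amortization is needed beyond what Proposition~\ref{prop:reclustering-bound} already provides.
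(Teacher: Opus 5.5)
Your proposal is correct and follows the paper's argument. The body of $\textsc{Insert}(p,\gamma)$ costs $\BO{k}$ per call over at most $n$ calls, and every call of $\textsc{Reclustering}(\gamma)$, including those issued from $\textsc{Insert}$, is already covered by Proposition~\ref{prop:reclustering-bound}; your $\BO{1}$ list- and counter-update bookkeeping just spells out what the paper states in one line.
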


Consider now the deletions.  
In what follows, for time $t$ and $x \in X_t$, and for each $\gamma \in \Gamma$, let $c^{\gamma, t}_{ID}(x)$ be the index of the cluster in $\Clu^\gamma$ that $x$ is in at time $t$, and let $c^{\gamma,t}_{ID}(x) = k+1$ if $x\in U^\gamma$. 

\begin{lemma}\label{lem:center-deletion-bound}
    Let $x\in X$ and $\gamma\in\Gamma$. Call $\mathcal{Q}^\gamma_x$ the set of centers $c$ such that at time $t = \tdel(c)-1$ it holds that $x$ is vanishing for $\gamma$ and $c^{\gamma, t}_{ID}(c)\le c^{\gamma, t}_{ID}(x)$.
    Then, $|\mathcal{Q}^\gamma_x|\le k+1$.
\end{lemma}

\begin{proposition} \label{prop:deletion-bound}
    The total number of operations made by %
    all calls of $\textsc{Delete}(p,\gamma)$ is $O(n\cdot k)$. 
\end{proposition}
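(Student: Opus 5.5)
We split the cost of a call $\textsc{Delete}(p,\gamma)$ into three parts: (a) the $O(1)$ work when $p$ is not a center, plus the call to $\textsc{Reclustering}(\gamma)$; (b) the work of reassigning the points of $\Clu_i^\gamma\setminus\{p\}$ when $p=c_i^\gamma$ is a center (lines~\ref{line:deletion-reclustering-for}--\ref{line:del_relabeling}); and (c) the creation of a new cluster from $U^\gamma$ (lines~\ref{line:del_create_cluster}--\ref{line:del_insert_in_C_l}). Part (a) totals $O(n)$ plus $O(n\cdot k)$ by Proposition~\ref{prop:reclustering-bound}. The relabeling in (b) and the choice of $u$ in (c) cost $O(k)$ plus time linear in the lists scanned, and there are at most $n$ deletions. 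So it suffices to bound the total number of (point, center-deletion) incidences in (b) and (c). Each such incidence costs $O(k)$ in (b), from scanning centers $j>i$, and $O(1)$ in (c).

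For (b), the points of $\Clu_i^\gamma$ still present at time $\tdel(c_i^\gamma)-1$ are all persistent, since vanishing points of that cluster have already expired. We charge each moved point $x$ to the deletion of its center. After the move, $x$ lies in some cluster $j>i$ (renumbered to $j-1$) or in $U^\gamma$. Its index $c_{ID}$ therefore never increases by relabeling and increases only when it is moved. We would argue that a persistent point can be reassigned in this way only $O(k)$ times before its index reaches $k+1$, i.e.\ before it lands in $U^\gamma$. The reason is that each such move strictly increases its index relative to the surviving clusters, and indices decrease only through deletions of lower-indexed centers. We expect this potential argument to be the delicate step. Instead of a potential on the index, the cleaner route is Lemma~\ref{lem:center-deletion-bound}: once $x$ has been reclustered and becomes vanishing, the centers with index at most $c_{ID}(x)$ that get deleted while $x$ is vanishing number at most $k+1$. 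We would combine this with the fact, from the proof of Proposition~\ref{prop:reclustering-bound}, that each point is persistent in at most one reclustering phase. Together these give $O(k)$ charged events per point, hence $O(n\cdot k)$ incidences overall, and $O(n k^2)$ if each incidence costs $O(k)$.

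That quadratic bound is too weak, so the scanning cost must be refined. For each moved point we would charge the scan over $j=i+1,\dots,\ell^\gamma$ only to the advance of its index, a telescoping sum bounded by $k+1$ per point per persistent phase, plus $O(1)$ per move. For (c), each point of $U^\gamma$ is scanned at every cluster creation while it remains in $U^\gamma$. Points in $U^\gamma$ are vanishing with $c_{ID}=k+1$, so every center deletion is in $\mathcal{Q}^\gamma_x$. Lemma~\ref{lem:center-deletion-bound} therefore bounds these scans by $k+1$ per point, giving $O(n\cdot k)$ in total. Summing (a), (b) and (c) yields $O(n\cdot k)$. The main obstacle is making the charging in (b) precise, i.e.\ showing that for each persistent point the lengths of its scans telescope to $O(k)$ overall. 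We would attempt this by tracking $c_{ID}(x)$ together with the number of centers with index below it that are deleted, again invoking Lemma~\ref{lem:center-deletion-bound}.
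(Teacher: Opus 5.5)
\textbf{The gap is in part (b).} Your parts (a) and (c) are fine. Part (c) is essentially the paper's argument: a point of $U^\gamma$ is vanishing with index $k+1$, so every center deletion during its stay lies in $\mathcal{Q}^\gamma_x$. Points pushed into $U^\gamma$ by the same deletion are paid $O(1)$ each in (b).

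In (b), the per-point bounds you aim for are false, so this is not a matter of making a charging precise. Lemma~\ref{lem:center-deletion-bound} only counts center deletions that occur while $x$ is vanishing. But a vanishing point is never moved by the loop at Line~\ref{line:deletion-reclustering-for}. All of the cost in (b) arises while points are persistent, and nothing in your argument bounds the length of a persistent phase.

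Here is a counterexample. Let $k\ge 2$, and let $c^{(1)},x,c^{(2)},c^{(3)},\dots,c^{(m)}$ arrive in this order, where:
\begin{itemize}
\item $x$ outlives the whole sequence;
\item each $c^{(i)}$ is at distance $2\gamma$ from $x$ and at distance $4\gamma$ from every other $c^{(j)}$;
\item $c^{(i)}$ expires when $c^{(i+2)}$ arrives.
\end{itemize}
Then $x$ is a persistent point of the cluster of $c^{(i)}$. When $c^{(i)}$ expires, $x$ is moved into the cluster of $c^{(i+1)}$ and relabeled back to index $1$, and $c^{(i+2)}$ becomes center $2$. $\textsc{Reclustering}(\gamma)$ is never triggered: each center lies in its own cluster and counts as vanishing, so (\ref{eq:size_balance}) fails for every index. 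Hence $x$ is reassigned $\Theta(m)$ times within a single persistent phase. The telescoping sum of its index advances is therefore $\Theta(m)$, not $O(k)$. This refutes both ``$O(k)$ charged events per point'' and ``telescoping to $k+1$ per persistent phase.''

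\textbf{The missing ingredient is the balance invariant enforced by $\textsc{Reclustering}$.} Every $\textsc{Insert}$/$\textsc{Delete}$ ends with a reclustering call. So right before a center $c$ is deleted, no index satisfies (\ref{eq:size_balance}). Writing $j=c^{\gamma,\tdel(c)-1}_{ID}(c)$, this gives $\sum_{i\ge j}|P_i^\gamma|\le |U^\gamma|+\sum_{i\ge j}|V_i^\gamma|$.

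The paper uses this in three steps:
\begin{itemize}
\item Each center deletion is charged $O(1)$ per persistent point in clusters of index $\ge j$. This pays for moving the points of the deleted cluster and for the unit index decrease that relabeling inflicts on later persistent points. So each point's scans telescope to $O(k)$ plus the charges it receives.
\item The invariant transfers these charges to the vanishing points in $U^\gamma$ or in clusters of index $\ge j$, that is, to the points $y$ with $c\in\mathcal{Q}^\gamma_y$.
\item Exchanging the order of summation and applying Lemma~\ref{lem:center-deletion-bound} bounds the total by $\sum_y|\mathcal{Q}^\gamma_y|\le n(k+1)$.
\end{itemize}
In the example above, the charge for each move of $x$ lands on vanishing points such as the deleted center itself. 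This is why the lemma allows $c=x$. Without this transfer from persistent to vanishing points, Lemma~\ref{lem:center-deletion-bound} cannot be applied to (b) at all.
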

\begin{proof}
    First, we bound the number of operations done by the re-assignment of cluster points in the loop at Line \ref{line:deletion-reclustering-for}.
    We begin by noticing that when $c_i$ is deleted, only persistent points contribute to the loop at Line~\ref{line:deletion-reclustering-for}.
    For a single such point $x \in \Clu_i, x \neq c_i$, Lines 6-13 perform $O\big(c^{\gamma, t}_{ID}(x) - c^{\gamma, t-1}_{ID}(x)\big)$ operations, i.e., proportional to the difference between the current index $i$ and the index of the cluster that it will end up in. If $c^{\gamma, t}_{ID}(x)$ was non-decreasing over time, then the maximum number of operations for $x$ across all timesteps would be $O(k)$. However this is not the case because of re-indexing due to center deletions.

    Given that $x$ must be a persistent point, we are only interested in events in which $c^{\gamma, t}_{ID}(x)$ decreases and $x$ remains a persistent point. 
    The only such case occurs during the relabeling process in a deletion that involves a center (Line \ref{line:del_relabeling}) with index $< c^{\gamma, t-1}_{ID}(x)$. 
    In this case $c^{\gamma, t}_{ID}(x)=c^{\gamma,t-1}_{ID}(x)-1$, so each center deletion adds 1 to the maximum number of re-assignment operations for each persistent point with index greater or equal to its own.

    Let $\delset:=\{c: c\in C^{\gamma, \tdel(c)-1}\}$ be the set of all deleted centers at any time. 
    We remark that, for $x \in X$, superscripts $\tdel(x)-1$ denote states right before the deletion of $x$.
    We get that the total number of re-assignment operations is:
    \[ 
        O\bigg(n\cdot k + \sum_{c\in\delset}\sum_{i\ge c^{\gamma, \tdel(c)-1}_{ID}(c)} |P^{\gamma, \tdel(c)-1}_i| \bigg) =
    \]
    \[ 
        = O\bigg(n k + \sum_{c\in\delset} \Big(|U^{\gamma, \tdel(c)-1}|+\!\!\! \sum_{i\ge c^{\gamma,\tdel(c)-1}_{ID}(c)} \!\!\!\!\!\!|V^{\gamma,\tdel(c)-1}_i| \Big)\! \bigg),
    \]
    where we used the Property (\ref{eq:size_balance}). 
    If we fix $c\in\delset$ in the last summation and call $j:=c^{\gamma, \tdel(c)-1}_{ID}(c)$ we get that, for each $x\in U^{\gamma, \tdel(c)-1}\cup\bigcup V^{\gamma, \tdel(c)-1}_{i\ge j}$,
    $x$ is by definition vanishing and $c^{\gamma, \tdel(c)-1}_{ID}(c)\le c_{ID}^{\gamma, \tdel(c)-1}(x)$.
    Therefore we have:
    \[
    \begin{split}
        &\sum_{c\in\delset} \bigg(|U^{\gamma, \tdel(c)-1}|+\sum_{i\ge c^{\gamma, \tdel(c)-1}_{ID}(c)} |V^{\gamma, \tdel(c)-1}_i| \bigg) \\ 
        &\qquad\leq\sum_{x\in X} \Big|\Big\{c\in\delset: \substack{x \text{ is vanishing at time } \tdel(c)-1 \\\text{ and } c^{\gamma,\tdel(c)-1}_{ID}(c)\le c^{\gamma,\tdel(c)-1}_{ID}(x)}\Big\} \Big| \\
        &\qquad= \sum_{x\in X}|\mathcal{Q}^\gamma_x|\le n\cdot(k+1),
        \end{split}
    \]
    where we used a double counting argument by rearranging the summation order, and then applied Lemma \ref{lem:center-deletion-bound}. 
    Hence the total number of operations made by the re-assignment of cluster points in the deleting function over all calls of $\textsc{Delete}(p,\gamma)$ is $O(n\cdot k)$.
    Second, we bound the number of operations made by the creation of a new cluster in the deleting function (Line \ref{line:del_create_cluster}). 
    When creating a new cluster we scan each element in $U$ for finding the one with greater $\tdel$ and then we do the same for finding which points are within distance $2\gamma$. Therefore, the total number of operations is $O(\sum_{c\in\delset}|U^{\gamma, \tdel(c)-1}|)$, which, as above, is $O(n\cdot k)$.

    Finally, all other operations in the deletion procedure take $O(k)$ time, which concludes the proof.
\end{proof}

We can now state the time bounds for both update and query calls, as well as the memory requirements of our algorithm.
\begin{theorem} \label{thm:final-bound}
The total number of operations made by a sequence of $n$ consecutive
calls to $\textsc{Update}_{2+\eps}$ is $\BO{n ({\max_t\log|X_t|} + (\log\Delta / \eps) k)}$. 
\end{theorem}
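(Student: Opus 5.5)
The bound follows by splitting the work of $n$ consecutive calls to $\textsc{Update}_{2+\eps}$ into two parts: the priority-queue work on $Q$, and the per-guess work of $\textsc{Insert}$ and $\textsc{Delete}$. These two parts are independent, since $Q$ is shared across guesses while each guess $\gamma$ keeps its own clustering structure.

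\textbf{Priority-queue work.} Over the $n$ calls there are $\Theta(n)$ insertions into $Q$ and $O(n)$ extractions of expired points, because every point is extracted at most once. In addition, each call performs one failing test of the condition $\min_{q\in Q}\tdel(q)\le t$ when the while loop exits. If $Q$ is a standard binary heap, each insertion or extraction costs $O(\log|Q|)$. At any moment $Q$ holds exactly the points inserted and not yet removed, so $|Q|\le \max_t|X_t|$, up to a transient single element. The failing test is a min lookup costing $O(1)$. Hence the total queue cost is $O(n\max_t\log|X_t|)$. The $O(1)$ pointer updates linking entries of $Q$ to the lists $\Clu_i^\gamma$ and $U^\gamma$ are charged to the corresponding $\textsc{Insert}$ or $\textsc{Delete}$ calls.

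\textbf{Per-guess work.} Fix $\gamma\in\Gamma$. Every call to $\textsc{Insert}(\cdot,\gamma)$ or $\textsc{Delete}(\cdot,\gamma)$ comes from one of the $n$ update calls, so the sequence for this guess has $\Theta(n)$ insertions and $O(n)$ deletions. Proposition~\ref{prop:insertion-bound} bounds the total work of all $\textsc{Insert}$ calls by $O(nk)$. Proposition~\ref{prop:deletion-bound} does the same for all $\textsc{Delete}$ calls. The $\textsc{Reclustering}(\gamma)$ calls triggered from these procedures are covered by Proposition~\ref{prop:reclustering-bound}, which is already used inside the previous two bounds. Summing over guesses gives $O(|\Gamma|\cdot nk)$. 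By the definition of $\Gamma$ in \eqref{eq:gamma} with $\beta=\eps/2$,
\[
|\Gamma|\le \log_{1+\beta}\Delta+2 = O\!\left(\frac{\log\Delta}{\log(1+\eps/2)}\right)=O\!\left(\frac{\log\Delta}{\eps}\right),
\]
using $\log(1+\beta)\ge\beta/2$ for $\beta\le 1$. If $\eps$ is large the bound $O(\log\Delta)$ is even better. So the per-guess work totals $O(n(\log\Delta/\eps)k)$. Adding the two parts gives $O\big(n(\max_t\log|X_t|+(\log\Delta/\eps)k)\big)$, as claimed.

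\textbf{Main obstacle.} The delicate point is checking that the earlier propositions apply to the sequence each guess actually sees. In particular, the expiration-driven calls $\textsc{Update}_{2+\eps}(\textsc{null},t)$ and the while loop may trigger many $\textsc{Delete}$ calls within a single update. Those propositions are stated in terms of the numbers of insertions and deletions, and every point is deleted at most once. Therefore $O(n)$ deletions in total still holds, even though a single update call may perform many of them.
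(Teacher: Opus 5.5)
Your proof is correct and follows the paper's route. The heap work on $Q$ costs $O(n\max_t\log|X_t|)$, Propositions~\ref{prop:insertion-bound} and~\ref{prop:deletion-bound} give $O(nk)$ per guess, and you multiply by $|\Gamma| = O(\log\Delta/\eps)$; your extra bookkeeping (the failing while-test, $|Q|\le\max_t|X_t|$, each point deleted at most once) is sound. One slip in a side remark: for large $\eps$, $\log_{1+\eps/2}\Delta$ is not $O(\log\Delta/\eps)$, so the $O(\log\Delta)$ bound is weaker, not "even better", and the stated bound (as in the paper) implicitly assumes $\eps$ is bounded.
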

\begin{proof}
By implementing the priority queue $Q$ as a heap, 
each of the $\Theta(n)$ insertions and $O(n)$ remove-min operations in $Q$ costs $\BO{\log|X_t|}$, so their overall cost is $\BO{n \cdot {\max_t \log|X_t|}}$. By Propositions~\ref{prop:insertion-bound} and \ref{prop:deletion-bound}, for each $\gamma \in \Gamma$, we have that the overall costs of all calls $\textsc{Insert}(p,\gamma)$ and $\textsc{Delete}(p,\gamma)$ is  $\BO{nk}$. Since  $|\Gamma| \in \BO{\log_{1+\eps}\Delta}=\BO{\log\Delta/\eps}$, the theorem follows.
\end{proof}

\begin{theorem} \label{thm:query-2}
    The number of operations made by each $\textsc{Query}_{2+\eps}$ call is $\BO{(\log\Delta/\eps) + k}$.
\end{theorem}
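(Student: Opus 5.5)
The argument splits the work of $\textsc{Query}_{2+\eps}(t)$ into two parts: locating $\gamma^*$, and outputting $C^{\gamma^*,t}$. As the paper assumes, when $\textsc{Query}_{2+\eps}(t)$ is invoked the data structure holds no expired points. Any cleanup through $\textsc{Update}_{2+\eps}(\textsc{null},t)$ is charged to the update cost already bounded in Theorem~\ref{thm:final-bound}, not to the query.

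\emph{Finding $\gamma^*$.} For every $\gamma\in\Gamma$, $U^\gamma$ is stored as a doubly-linked list. I take it to keep a head pointer (or a size counter, which $\textsc{Insert}$, $\textsc{Delete}$ and $\textsc{Reclustering}$ update in $\BO{1}$ per change). The test $U^{\gamma,t}=\emptyset$ therefore costs $\BO{1}$. The query scans $\Gamma$ in increasing order of $\gamma$ and stops at the first guess with an empty $U^\gamma$; by definition this guess is $\gamma^*$. The scan touches at most $|\Gamma|=\BO{\log_{1+\beta}\Delta}=\BO{\log\Delta/\eps}$ guesses, since $\beta=\eps/2$.

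\emph{$\gamma^*$ always exists.} For the largest guess $\gamma\ge\dmax$ we have $2\gamma\ge \dmax$, so every active point lies within distance $2\gamma$ of any center. If $C^\gamma\neq\emptyset$, property~1 of Lemma~\ref{lem:invariant} (its contrapositive) gives $U^\gamma=\emptyset$. If $C^\gamma=\emptyset$, then $|C^\gamma|=0<k$, and the "moreover" part of Lemma~\ref{lem:invariant} gives $U^\gamma=\emptyset$. Either way the scan terminates.

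\emph{Output and total.} The centers $C^{\gamma^*}$ are kept as a list of $\ell^{\gamma^*}\le k$ elements, so copying them out costs $\BO{k}$. Adding the two parts gives $\BO{\log\Delta/\eps + k}$. There is no real obstacle in this argument; the only point needing care is that emptiness of $U^\gamma$ must be checkable in constant time, and this follows from the list representation. One could even reach $\BO{\log\log\Delta/\eps + k}$ by binary search on $\Gamma$, provided monotonicity of emptiness in $\gamma$ were established, but the linear scan already achieves the claimed bound and needs no such property.
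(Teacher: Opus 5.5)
Your proposal is correct and follows the paper's own proof: a linear scan over the $\BO{\log\Delta/\eps}$ guesses with an $\BO{1}$ test per guess, then $\BO{k}$ time to output $C^{\gamma^*}$. The paper phrases the per-guess test as checking the size of $C^\gamma$, and your emptiness test on $U^\gamma$ matches Algorithm~\ref{alg:query-2} more precisely.

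One small slip is in your optional existence aside. The contrapositive of property~1 goes the wrong way. That $U^\gamma=\emptyset$ for the top guess instead follows from property~2, or from the defining condition that points of $U^\gamma$ lie at distance $>2\gamma$ from every center. This aside is not needed for the time bound in any case.
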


\begin{theorem} \label{thm:mem-2}
    The maximum number of points maintained in memory when processing the pointset $X$ is $\BO{(\log\Delta/\eps) \max_t|X_t|}$.
\end{theorem}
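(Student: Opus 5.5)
The proof is a direct accounting of what the data structure stores at any time step $t$, split into the global priority queue $Q$ and the per-guess structures for each $\gamma \in \Gamma$. The claim to establish is that at time $t$ the algorithm stores $\BO{|X_t|}$ points in $Q$ and $\BO{|X_t|}$ points per guess. Since $|\Gamma| = \BO{\log_{1+\beta}\Delta} = \BO{\log\Delta/\eps}$, this gives $\BO{(\log\Delta/\eps)\,|X_t|}$ at time $t$, and taking the maximum over $t$ yields the statement.

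\textbf{The queue $Q$.} By the description of $\textsc{Update}_{2+\eps}$, a point is inserted in $Q$ upon arrival and removed exactly when its deletion time is reached. Between calls, therefore, $Q$ contains precisely the active points, so $|Q^t| \le |X_t|$. Inside a single call of $\textsc{Update}_{2+\eps}$, $Q$ can momentarily hold the new point together with some points that are in the course of being deleted. This adds only $\BO{1}$, or at most a multiplicative constant, over the previous active set, and is absorbed in the bound.

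\textbf{The structure for a fixed $\gamma$.} The sets $C^\gamma$, the clusters $\Clu^\gamma_1,\dots,\Clu^\gamma_{\ell^\gamma}$ and $U^\gamma$ store each point at most once; centers also appear in their own cluster, so this at most doubles the count. By Lemma~\ref{lem:invariant}, every point of $X_t$ lies in exactly one of $U^{\gamma,t}$ or some $\Clu^{\gamma,t}_i$. Conversely, the only points ever placed in these structures are inserted points, and each is removed by $\textsc{Delete}(q,\gamma)$ when it expires. Hence the stored points are exactly $X_t$, giving $\BO{|X_t|}$ per guess, plus $\BO{k}$ bookkeeping for the counters $|P_i^\gamma|, |V_i^\gamma|$, with $k \le |X_t|$ whenever clusters exist.

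\textbf{Temporary storage during procedures.} The step I expect to need the most care is storage used inside $\textsc{Delete}$ and $\textsc{Reclustering}$. In $\textsc{Reclustering}$, the set $U'$ is formed by moving points out of $U^\gamma$ and the clusters, not by copying them. In $\textsc{Delete}$, points of $\Clu_i^\gamma$ are likewise moved into other clusters or into $U^\gamma$. Neither procedure increases the number of stored points beyond the current active set plus $\BO{1}$, and the procedures for different $\gamma$ run sequentially, so the transient overhead is at most $\BO{|X_t|}$ at any moment. Summing over $\Gamma$ and adding $Q$ concludes the argument.
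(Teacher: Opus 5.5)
Your proposal is correct and follows essentially the same route as the paper. The paper also argues that each active point lies in exactly one of the $\Clu^\gamma_i$ or $U^\gamma$, that $|C^\gamma|\le|X_t|$, and that $Q$ adds $|X_t|$; it then multiplies by $|\Gamma|=\BO{\log\Delta/\eps}$ and notes that storage can only shrink until the next update. One small slip: the counter bookkeeping is $\BO{\ell^\gamma}$, and it is bounded because $\ell^\gamma\le|X_t|$ (each cluster has a distinct active center), not because ``$k\le|X_t|$'', which need not hold; counters are not points anyway, so this does not affect the claim.
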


\subsection{Comparison with related work} 
\label{sec:comparison-6a}

Our algorithm mimics the one from \citet{ChanGS2018}, yet by exploiting the known deletion times it lowers the amortized update times from $O\big( \frac{\log\Delta}{\eps}k^2\big)$ to $O\big(\frac{\log\Delta}{\eps}k\big)$, while having the same approximation ratio.
In fact, this allows us to have faster updates even compared to the more complex $(2+\eps)$-approximation algorithm by \citet{bateni2023optimal}, which has amortized $O\big((\log\Delta/\eps)\max_t\log|X_t|(k+\max_t\log|X_t|) \big)$ update time. 
Importantly, since our algorithm is deterministic, it works even against metric-adaptive adversaries, while the aforementioned two can only deal with oblivious ones. 
Indeed, making use of lifetimes, the algorithm can circumvent the metric-adaptive lower bound by \citet[Theorem 3]{bateni2023optimal}, for which no dynamic algorithm with $o(|X_t|)$ update time can have $O(1)$ approximation. 
In fact, our update times are tight, up to a factor $\BO{\log\Delta/\eps}$, due to the lower bound of $\Omega(|X|k)$ distance queries given by \citet[Theorem 54]{bateni2023optimal}, holding even in the offline setting.

\section{A space-efficient algorithm with constrained lifetimes}
\label{sec:6approx}
The algorithm developed in the previous section achieves an almost-tight approximation ratio, but requires memory which is linear in the size of the active set. In this section, we devise an algorithm which requires much smaller memory, constant with respect to the size of the active set, at the cost of a slightly worse approximation guarantee.

Let $\eps>0$ be a user-defined approximation tolerance, and
let $\Gamma$ be the set of guesses defined by Equation~\ref{eq:gamma}, 
instantiated now with
 $\beta=\eps/6$.
At any time $t$, the algorithm maintains, for each $\gamma \in \Gamma$, a data structure, akin to the one from \cite{Cohen-AddadSS16}, which is able to provide a solution of radius $6\gamma$ or a certificate that no solution of radius $\gamma$ exists. The data structure consists of the following sets:
\begin{itemize}
    \item $\Ag = (a^\gamma_1, \dots, a^\gamma_{\ell^\gamma}) \subseteq X_t$ a set of $\ell^\gamma \leq k+1$ \emph{attractors} such that $d(a^\gamma_i, a^\gamma_j) > 2\gamma, \forall i\neq j$. For $x\in X_t$ and $a \in \Ag$, we say that $x$ is attracted by $a$ if $\tarr(x) \geq \tarr(a)$ and $d(x,a) \leq 2\gamma$.
    \item $R_{\rm act}^\gamma = (r^\gamma_1, \dots, r_{\ell^\gamma}^\gamma) \subseteq X_t$ a set of \emph{representatives}, where, for $1 \leq i \leq \ell^\gamma$, $r_i$ is the point with latest deletion time among the ones attracted by $a_i \in \Ag$, and it is denoted with $\repg(a_i) = r_i$. We use $R_{\rm orph}^\gamma \subseteq X_t$ to denote the set of representatives whose attractors are no longer in $A^\gamma$, and define $\Rg = R_{\rm act}^\gamma \cup R_{\rm orph}^\gamma$. 
\end{itemize}

Observe that if $|A^\gamma| \geq k+1$, there are $k+1$ points at distance $>2\gamma$, which implies that $\gamma < r_k^*(X_t)$. Conversely, if $|A^\gamma| \leq k$, we will show that any point of $X_t$ is at distance at most $6\gamma$ from $R^\gamma$. 

\paragraph{Update procedure}
When a new point $p$ arrives, procedure $\textsc{Update}_{6+\eps}(p)$ is called
(see Algorithm~\ref{alg:update} for the pseudocode) to handle the insertion of $p$, as well as possible deletions of expired points. 
For every $\gamma \in \Gamma$, the data structures are updated as follows. First, all points expired before time $\tarr(p)$ are removed from $\Ag$ and $\Rg$. Then, if $p$ is at distance $>2\gamma$ from every attractor $a \in \Ag$, then $p$ is added to $\Ag$, and to $\Rg$ as representative of itself
(i.e., $p = \repg(p)$), and procedure \textsc{Cleanup}($p,\gamma$) is invoked to possibly evict unnecessary points from $\Ag$ and $\Rg$.
Otherwise, an attractor $a \in \Ag$ with
$d(a,p) \leq 2 \gamma$ and $\tdel(\repg(a)) < \tdel(p)$, if any, is arbitrarily chosen and $p$ becomes the new representative $\repg(a)$. If no such attractor is found, $p$ is discarded. 

Procedure \textsc{Cleanup}($p,\gamma$) (see Algorithm~\ref{alg:cleanup} for the pseudocode), works as follows.
If $|\Ag| = k+2$, then the attractor with minimum deletion time is removed from 
the set. After this removal, or if the size of $\Ag$ is $k+1$ at the start of 
the procedure, all  
representatives $q$ such that $\tdel(q) < t_{\rm min}$,
where $t_{\rm min} = \min_{a \in \Ag} \tdel(a)$ is the minimum deletion time of an attractor,
are removed from $\Rg$. 
These removals are justified by the following argument. If $|\Ag| = k+1$, then
these $k+1$ points, which are
at distance greater than  $2 \gamma$ from one another, 
provide a certificate that no solution of radius $\leq \gamma$ exists for
any $X_{t'}$ with $t \leq t' < t_{\rm min}$, hence $\gamma$ will not be a useful guess until $t_{\rm min}$, and there is no need to keep points 
that expire in this time interval.

\begin{algorithm2e}[t]
\caption{$\textsc{Update}_{6+\eps}(p)$}
\label{alg:update}
\SetAlgoVlined
\LinesNumbered
\ForEach{$\gamma \in \Gamma$}{
Remove points $x$ of $\Ag\!$ and $\Rg\!$ with $\tdel(x) \!\leq\! \tarr(p)\!\!$\\
\label{update-step}
    \Let{$E$}{$\{ a \in \Ag : d(a,p) \leq 2\gamma\}$}\label{line:update6eps-E} \\
    \If{$E = \emptyset$}{
      \Let{$\Ag$}{$\Ag \cup \{p\}$} \\
      \Let{$\repg(p)$}{$p$}\\
      \Let{$\Rg$}{$\Rg \cup \{p\}$}\\
      \textsc{Cleanup}($p,\gamma$)\\
    }
    \Else{
    \Let{$E'$}{$\{a \in E \; : \; \tdel(\repg(a)) < \tdel(p)\}$} \\
    \If{$E' \neq \emptyset$}{
        \Let{$a$}{arbitrary attractor in $E'$} \\
        \Let{$\Rg$}{$\Rg \setminus \{\repg(a)\}$}\\
        \Let{$\repg(a)$}{$p$}\\
        \Let{$\Rg$}{$\Rg \cup \{p\}$}\\
      }
    }
}
\end{algorithm2e}

\begin{algorithm2e}[h]
\caption{\textsc{Cleanup}($p, \gamma$)}
\label{alg:cleanup}
\SetAlgoVlined
\LinesNumbered

\If{$|\Ag| = k+2$} {
    \Let{$a_{\rm min}$}{$\argmin_{a \in \Ag} \tdel(a)$}\;
    Remove $a_{\rm min}$ from $\Ag$\;
\label{cleanup-step}
}
\If{$|\Ag| = k+1$}{
    \Let{$t_{\rm min}$}{$\min_{a \in \Ag} \tdel(a)$}\;
    Remove all points $q$ with $\tdel(q) < t_{\rm min}$ from $\Rg$
}
\end{algorithm2e}

\textbf{Query procedure} $\ $
At any time $t$, to obtain a solution to $k$-center for the active set
$X_t$, procedure $\textsc{Query}_{6+\eps}(t)$ is invoked (see Algorithm~\ref{alg:query-6} in the appendix for the pseudocode). 
First, for each $\gamma \in \Gamma$, all points $p$ with $\tdel(p) \leq t$  are removed from $\Ag$ and $\Rg$. Then, the procedure searches
the minimum guess $\gamma \in \Gamma$ such that $|\Ag| \leq k$
and a $2\gamma$-covering $C^\gamma$ of $\Rg$ of size at most $k$ can be found using the simple greedy strategy by \citet{HochbaumS86}, and this set $C^\gamma$ is returned as the solution.

\subsection{Approximation} \label{sec-analysis}

In what follows, we denote with superscript $t$ the states of data structures either after the call of $\textsc{Update}_{6+\eps}(p)$, with $\tarr(p) = t$, or after the execution of the first for loop in $\textsc{Query}_{6+\eps}(t)$.
To prove the approximation ratio, we first prove that the algorithm maintains the following invariant.

\begin{lemma} \label{lem:invariants}
After any execution of $\textsc{Update}_{6+\eps}(p)$, the following
properties hold for each $\gamma \in \Gamma$, and $t = \tarr(p):$
\begin{enumerate}
\item\label{lemprop1} 
If $|\Agt[t]| \le k$, then $\forall x\in X_t$ we have 
$
d(x, \Rgt[t]) \le 4\gamma.
$
\item\label{lemprop2} 
If $|\Agt[t]| = k+1$, then $\forall x\in X_t$ 
such that $\tdel(x) \geq \min_{a \in \Agt[t]} \tdel(a)$, we have
$
d(x, \Rgt[t]) \le 4\gamma.
$
\end{enumerate}
\end{lemma}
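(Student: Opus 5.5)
The plan is to fix $\gamma$ and an arbitrary $x \in X_t$ and follow a single "witness" representative for $x$ from $\tarr(x)$ up to time $t$. The witness is built when $x$ arrives. If $E=\emptyset$ in $\textsc{Update}_{6+\eps}(x)$, then $x$ becomes an attractor with $\repg(x)=x$, and we set $a:=x$. Otherwise there is some $a \in \Ag$ with $d(a,x)\le 2\gamma$. Since $a$ arrived before $x$, $x$ is attracted by $a$.

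The first step is a monotonicity claim: as long as $a$ stays in $\Ag$, its representative $\repg(a)$ is attracted by $a$ and satisfies $\tdel(\repg(a)) \ge \tdel(x)$. This holds right after $x$ arrives, in each of the three sub-cases:
\begin{itemize}
\item $x$ became the representative;
\item $x$ was discarded, which happens only if $\tdel(\repg(a)) \ge \tdel(x)$;
\item $x$ replaced some other attractor's representative, in which case $\repg(a)$ already had $\tdel \ge \tdel(x)$.
\end{itemize}
Afterwards, a representative is only ever replaced by a later-arriving point attracted by $a$ with strictly larger deletion time. So the claim persists, and the current representative $r$ of $a$ satisfies $d(x,r)\le d(x,a)+d(a,r)\le 4\gamma$.

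If $a$ later leaves $\Ag$, through expiry or eviction in \textsc{Cleanup}, its last representative $r$ simply moves to $R^\gamma_{\rm orph}$. The distance bound $d(x,r)\le 4\gamma$ still holds, since it only used $d(a,r)\le 2\gamma$ at the time $r$ was chosen. Hence $r$ stays a valid witness as long as it remains in $\Rg$. It cannot be removed by expiry before time $t$, because $\tdel(r)\ge\tdel(x)>t$. The only remaining danger is that $r$ is removed by \textsc{Cleanup} at some time $s\le t$, which requires $\tdel(r)<t_{\min}(s)$ and hence $\tdel(x)<t_{\min}(s)$.

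The main obstacle is to rule out this removal under the hypotheses of properties 1 and 2. For this I will prove an auxiliary invariant: if at time $s$ we have $|\Ag|=k+1$ with minimum attractor deletion time $t_{\min}(s)$, then at every later time $s'$ up to the first time $|\Ag|$ drops below $k+1$, we still have $|\Ag|=k+1$ and $\min_{a\in\Ag}\tdel(a)\ge t_{\min}(s)$. The induction runs over the update operations:
\begin{itemize}
\item An insertion of a new attractor brings the size to $k+2$, and \textsc{Cleanup} then removes the attractor of minimum deletion time. The remaining $k+1$ attractors therefore have minimum deletion time at least the previous minimum, whether or not the new point was the one removed.
\item The size can drop below $k+1$ only when some attractor expires, and every attractor has $\tdel\ge t_{\min}(s)$. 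So any such drop happens at a time $\ge t_{\min}(s)$.
\end{itemize}

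With the invariant in hand, both properties follow. Suppose $r$ was removed at time $s\le t$, so $\tdel(x) < t_{\min}(s)$.
\begin{itemize}
\item For property 1, $|\Agt[t]|\le k$ means the size dropped below $k+1$ between $s$ and $t$. By the invariant this gives $t\ge t_{\min}(s)>\tdel(x)$, contradicting $x\in X_t$.
\item For property 2, either the size dropped in between, which gives the same contradiction, or the invariant yields $\min_{a\in\Agt[t]}\tdel(a)\ge t_{\min}(s)>\tdel(x)$, contradicting the hypothesis on $x$.
\end{itemize}
So the witness $r$ is still in $\Rgt[t]$, and $d(x,\Rgt[t])\le 4\gamma$. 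The proof will also check the routine boundary details: the expiry convention $\tdel \le \tarr(p)$ used in the removal step, and the fact that when $x$ itself is evicted as an attractor the same orphan argument applies to its own representative.
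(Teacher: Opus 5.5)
Your argument is essentially the paper's. The paper also fixes, for each point $x$, an attractor $\psi^\gamma(x)$ within distance $2\gamma$. It follows the chain of successive representatives of that attractor; orphans can leave $\Rg$ only by expiry or by \textsc{Cleanup}. It rules out \textsc{Cleanup} removals with two observations. First, a point satisfying the hypothesis of property 1 or 2 at time $t$ (``relevant'') was relevant at every earlier time since its arrival. Second, \textsc{Cleanup} only removes non-relevant points. The paper calls this backward persistence of relevance ``immediate''. Your auxiliary invariant proves it: while $|\Ag|=k+1$ the minimum attractor deletion time never decreases, and the size can drop only at times $\ge t_{\min}(s)$. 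On that point your write-up is more complete than the paper's.

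There is, however, one false step. With $a$ an arbitrary attractor of $E$, your third sub-case claims that if $x$ replaced \emph{another} attractor's representative, then $\tdel(\repg(a))\ge\tdel(x)$. This does not follow. The algorithm picks an arbitrary element of $E'$, and $a$ may itself belong to $E'$ without being chosen. In that case $\tdel(\repg(a))<\tdel(x)$, the representative can expire before $t$, and your witness is lost.

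The fix is to choose the witness as the paper does via $\psi^\gamma$:
\begin{itemize}
\item if $x$ becomes the representative of some attractor (including itself when $E=\emptyset$), take $a$ to be that attractor;
\item only when $x$ is discarded, so that $E'=\emptyset$, take an arbitrary $a\in E$.
\end{itemize}
Then the third sub-case disappears and the rest of your argument goes through unchanged.
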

Based on the invariants, we can prove the approximation guarantee for the solutions returned by $\textsc{Query}_{6+\eps}$. 
\begin{theorem} 
\label{thm:approx-6}
For any time $t$, the solution returned by $\textsc{Query}_{6+\eps}(t)$ is a $(6+\epsilon)$-approximation for the $k$-center problem on the active set $X_t$.
\end{theorem}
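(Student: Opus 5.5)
Let $\gamma^*$ be the guess selected by $\textsc{Query}_{6+\eps}(t)$, i.e., the minimum $\gamma\in\Gamma$ with $|\Agt[t]|\le k$ for which the greedy procedure of \citet{HochbaumS86} returns a $2\gamma$-covering $C^\gamma$ of $\Rgt[t]$ with $|C^\gamma|\le k$. The proof splits into an upper bound on the radius of the returned solution and a lower bound on $r_k^*(X_t)$, as in the proof for $\textsc{Query}_{2+\eps}$.

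\emph{Upper bound.} The state after the first loop of $\textsc{Query}_{6+\eps}(t)$ is the state after the last update with expired points removed. Removing expired points does not invalidate Lemma~\ref{lem:invariants}: a point of $\Rgt[t]$ that was within $4\gamma$ of some $x\in X_t$ can only be removed if it expires, so I would check that the representative chosen is always one with deletion time at least that of the points it covers. Representatives are the attracted points with the latest deletion time, so this should hold. Since $|\Agt[t]|\le k$ at $\gamma^*$, Lemma~\ref{lem:invariants}(\ref{lemprop1}) gives $d(x,\Rgt[t])\le 4\gamma^*$ for all $x\in X_t$. The greedy cover guarantees $d(r,C^{\gamma^*})\le 2\gamma^*$ for every $r\in\Rgt[t]$. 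By the triangle inequality, $r_{C^{\gamma^*}}(X_t)\le 6\gamma^*$.

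\emph{Lower bound.} Let $\gamma'=\gamma^*/(1+\beta)$; if $\gamma^*$ is the smallest guess the claim is trivial. I would show $r_k^*(X_t)>\gamma'$ by examining why $\gamma'$ was rejected. There are two cases.
\begin{itemize}
\item $|\Agt[t]|=k+1$ at $\gamma'$. Then $k+1$ active points are pairwise more than $2\gamma'$ apart, so $r_k^*(X_t)>\gamma'$.
\item $|\Agt[t]|\le k$ at $\gamma'$, but the greedy $2\gamma'$-covering of $\Rgt[t]$ needs more than $k$ centers. Greedy then exhibits $k+1$ points of $\Rgt[t]\subseteq X_t$ pairwise more than $2\gamma'$ apart, which again gives $r_k^*(X_t)>\gamma'$.
\end{itemize}
Either way, $6\gamma^*<6(1+\beta)r_k^*(X_t)=(6+\eps)r_k^*(X_t)$, using $\beta=\eps/6$.

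The main obstacle is justifying that Lemma~\ref{lem:invariants}, which is stated only after updates, still holds after the deletions in the query's first loop, and in particular that case~(\ref{lemprop2}) cannot leave points uncovered. Indeed, $|\Agt[t]|$ may drop from $k+1$ to $\le k$ once an attractor expires. At that time $t\ge t_{\rm min}$, so every $x\in X_t$ satisfies $\tdel(x)>t\ge t_{\rm min}$, and property~(\ref{lemprop2}) then covers all remaining points. I would make this argument explicit, together with the monotonicity of representatives' deletion times noted above.
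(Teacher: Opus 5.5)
Your proposal is correct and follows essentially the same route as the paper: the upper bound $4\gamma+2\gamma$ comes from Lemma~\ref{lem:invariants} plus the greedy $2\gamma$-cover, and the lower bound comes from the rejection of $\gamma'=\gamma/(1+\beta)$, witnessed by either $k+1$ attractors or $k+1$ greedy centers pairwise more than $2\gamma'$ apart. Your extra check that the invariant survives the expiration loop of $\textsc{Query}_{6+\eps}(t)$ is handled only implicitly in the paper, through its superscript-$t$ convention that includes the post-loop state and through its proof of Lemma~\ref{lem:invariants}, whose representative chain has non-decreasing deletion times (exactly the monotonicity you invoke); making it explicit is a welcome addition rather than a different argument.
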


\subsection{Running time and memory complexity}
We now analyze the number of points maintained by the data structure employed by our algorithm, which directly influences the time and memory requirements. The analysis is made parametric with respect to a
specific notion of ordering of the input dataset $X$, which is defined below and it is a crucial novel ingredient introduced by our paper.
\begin{definition}  
\label{def:h-bounded}
For an integer parameter $H \geq 0$, we say that the set $X$ is \emph{$H$-ordered} if for every pair of points $p,q \in X$ such that $\tarr(p)<\tarr(q)$ and $\left|\{ x \in X \colon \ \tarr(p) < \tarr(x) < \tarr(q) \}\right| \geq H$, i.e., at least $H$ points arrive between $p$ and $q$,
we have that $\tdel(p) < \tdel(q)$.
\end{definition}  
Intuitively, the above notion provides a quantitative characterization of how
arbitrary the points' arrival and departure times are. For example, in the sliding-window setting, or in general when points are deleted in the order they arrive, the pointsets are 0-ordered. Under $H$-ordered sets, we have the following result.

\begin{lemma} \label{lem:space-bound-6}
Suppose that $X$ is $H$-ordered, for some $H \geq 0$. Then, for any guess $\gamma \in \Gamma$ and time $t$, we have that
\[
|\Agt[t]|+|\Rgt[t]| = \BO{\min\{(k+H),|X_t|\}}.
\]
\end{lemma}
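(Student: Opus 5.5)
The bound $|\Agt[t]|+|\Rgt[t]| \le 2|X_t|$ is immediate, since after the first step of $\textsc{Update}_{6+\eps}$ (or of $\textsc{Query}_{6+\eps}$) every stored point is active and a point is stored at most once in each set. The real work is the $\BO{k+H}$ bound. The procedures guarantee $|\Agt[t]|\le k+1$, and $|R^{\gamma,t}_{\rm act}|\le|\Agt[t]|\le k+1$ because each attractor has exactly one representative. So it suffices to prove $|R^{\gamma,t}_{\rm orph}| = \BO{k+H}$.

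First I will record the structural facts about an orphan $r$ whose (former) attractor is $a$. Since $r$ is attracted by $a$, we have $\tarr(r)\ge\tarr(a)$. Since $a$ is attracted by itself and $r$ is the representative with latest deletion time, $\tdel(r)\ge\tdel(a)$. Moreover, $r$ stays frozen once $a$ leaves $\Ag$: it is never replaced, and it disappears only when it expires or is removed by \textsc{Cleanup}. The attractor $a$ left $\Ag$ at some step $s\le t$ in one of two ways. Either it expired, so $\tdel(a)\le s$. Or it was evicted in \textsc{Cleanup} as the minimum-deletion-time attractor among $k+2$; in that case the remaining $k+1$ attractors all have $\tdel>\tdel(a)$, and the ensuing \textsc{Cleanup} removes every representative with $\tdel$ below their minimum $t_{\min}$.

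The key step is to order the orphans alive at time $t$ by the arrival time of their attractors, $a_1,a_2,\dots,a_m$, and show that only $\BO{k+H}$ of them can coexist. I will use $H$-orderedness in the form: if more than $H$ points arrive between $x$ and $y$ with $\tarr(x)<\tarr(y)$, then $\tdel(x)<\tdel(y)$.

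Now take an orphan $r_i$ and a later attractor $a_j$ with $\tarr(a_j)$ more than $H$ arrivals after $\tarr(r_i)$. Then $\tdel(r_i)<\tdel(a_j)$. If $a_j$ expired before $t$, this forces $r_i$ to have expired as well, a contradiction. If $a_j$ was evicted, then the $k+1$ surviving attractors at that moment have $\tdel>\tdel(a_j)$. I then want to argue that either $r_i$ was already removed by that \textsc{Cleanup} (because $\tdel(r_i)<t_{\min}$), or $t_{\min}$ is realized by one of only $\BO{k}$ ``recent'' attractors. Iterating this, all surviving orphans except $\BO{k}$ must have representatives arriving within a window of $\BO{H}$ consecutive arrivals. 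Distinct orphans are distinct points, so at most $\BO{H}$ orphans fit in that window, which gives $\BO{k+H}$ in total.

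The main obstacle is exactly this eviction case: \textsc{Cleanup} only removes representatives with $\tdel< t_{\min}$ at the moment it runs, so I must control orphans created by evictions without any later \textsc{Cleanup}. I would try a potential-style charging argument first. Charge each surviving orphan either to one of the $k+1$ attractors that were present at its attractor's eviction and are still relevant (at most $\BO{k}$ charges alive at once, since each current attractor can absorb $\BO{1}$ charges via its $\tdel$ order), or to its position in the last $H$ arrivals before the most recent attractor insertion. I would then verify with $H$-orderedness that no orphan escapes both charges.
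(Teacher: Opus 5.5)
There is a genuine gap. Your setup is fine: the $2|X_t|$ bound, $|R^{\gamma,t}_{\rm act}|\le k+1$, orphans being frozen, and the two ways an attractor can leave. But the proof stops where the counting has to happen. The step ``iterating this, all surviving orphans except $O(k)$ lie in a window of $O(H)$ arrivals'' is never derived, and the charging scheme you fall back on (each current attractor absorbing $O(1)$ charges ``via its $\tdel$ order'') is unsupported.

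Moreover, the obstacle you single out is not one. Suppose an attractor $a$ is evicted by \textsc{Cleanup} at step $s$. Then $a$ was the minimum-deletion-time attractor, so the new $t_{\min}$ is at least $\tdel(a)$. That same call deletes every stored representative with deletion time below $t_{\min}$. An orphan $r$ alive at $t$ with $\tarr(r)<\tarr(a)$ \emph{is} stored at step $s$: a point enters $R^\gamma$ only during its own arrival, it never re-enters after removal, and $r\in R^{\gamma,t}$ with $s\le t$. So your pairwise claim already holds outright in the eviction case: a surviving representative cannot precede a departed attractor by at least $H$ intermediate arrivals. You do not need to control later \textsc{Cleanup} calls, nor which attractor realizes $t_{\min}$.

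The missing idea is to fix a single pivot, and to see where the $O(k)$ exceptions really come from. Let $q$ be the most recently arrived attractor that is no longer in $A^{\gamma,t}$, as the paper does. By maximality of $q$, every orphan's attractor arrived no later than $q$. Set aside $q$'s own representative, which is at most one point. Split the remaining orphans by whether their attractor was still in $A^{\gamma,\tarr(q)}$ or had already left.
\begin{itemize}
\item If the attractor was still in $A^{\gamma,\tarr(q)}$, there are at most $k+1$ such orphans, since $|A^\gamma|\le k+1$ always.
\item If the attractor had already left, its representative was frozen before $q$ arrived, so it arrived before $\tarr(q)$. Apply the pairwise argument to the earliest such representative $p$ against $q$ itself. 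This shows that fewer than $H$ points arrive between $p$ and $q$, so there are $O(H)$ such orphans.
\end{itemize}
The $O(k)$ exceptions are therefore orphans whose representative may have arrived \emph{after} the pivot. That is only possible while their attractor is still in $A^\gamma$, and it is exactly what the bound $|A^{\gamma,\tarr(q)}|\le k+1$ controls.

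Sorting orphans by their attractors' arrival times alone, without this split, does not yield the bound. An orphan of an early attractor can have a representative that arrived after later attractors, so the $H$-ordering argument cannot be applied to it directly.
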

\begin{proof}
Let us fix arbitrary $\gamma$ and $t$, and observe that, straightforwardly, $|\Agt[t]|+|\Rgt[t]| \leq 2|X_t|$.
We will now show that $|\Agt[t]|+|\Rgt[t]| = \BO{k+H}$.
By construction, we know $|\Agt[t]| \leq k+1$, so we only have to show that
$|\Rgt[t]| = \BO{k+H}$. Let $A^\gamma_{\rm all}(t)$ denote the set of all attractors ever arrived up to time $t$,
and let $q$  be the one with largest arrival time, among those not in $\Agt[t]$, namely,
\[
q = \argmax_{a \in A^\gamma_{\rm all}(t) \setminus \Agt[t]} \tarr(a).
\]
Hence, $q$  arrived at some time $\tarr(q) < t$ and was
removed from $\Ag$ at some time $t'$, with
$\tarr(q) \leq t' < t$,
either by Line~\ref{update-step} of the $\textsc{Update}$ procedure, 
because it expired, or by Line~\ref{cleanup-step}
of the \textsc{Cleanup} procedure.
In the time interval $(\tarr(q),t]$ no more than 
$k+1$ new attractors arrived since, otherwise, one of the
attractors arrived in $(\tarr(q),t]$ would not belong to
$\Agt[t]$, because $\Ag$ can never contain more than $k+1$ points,  
and this would contradict the choice of $q$. 
For a point $p \in \Rgt[t]$, let $\psi^\gamma(p)$ denote the attractor
for which $p$ is representative. We partition $\Rgt[t]$ into 3 subsets:
\vspace{-0.25cm}
\begin{itemize}
\item $\Rgt[t](1) = \{ p \in \Rgt[t] :\ \tarr(\psi^\gamma(p)) \in  (\tarr(q), t] \ \}$
\item $\Rgt[t](2) = \{ p \in \Rgt[t] :\ \psi^\gamma(p) \in A^{\gamma,\tarr(q)} \ \}$
\item $\Rgt[t](3) = \{ p \in \Rgt[t] :\ \psi^\gamma(p) \not\in A^{\gamma,\tarr(q)} \\ \text{and } \tarr(\psi^\gamma(p)) < \tarr(q) \ \}$
\end{itemize}
\vspace{-0.25cm}
It is immediate to see that $\Rgt[t]=\Rgt[t](1) \cup \Rgt[t](2) \cup \Rgt[t](3)$. Moreover,  
$|\Rgt[t](1)| \leq k+1$ from what we observed before, 
and $|\Rgt[t](2)| \leq k+1$ since $|A^{\gamma,\tarr(q)}| \leq k+1$ by construction. 
We now argue that
$|\Rgt[t](3)| = \BO{H}$. First note that all points of  $\Rgt[t](3)$
must have arrived prior to $\tarr(q)$, since, 
by
definition of $\Rgt[t](3)$, the attractors for which they are representatives, entered and left $\Ag$ before $\tarr(q)$.
Let $p$ be the point of $\Rgt[t](3)$ with earliest arrival time. So, all other points of $\Rgt[t](3)$ must have arrived
in the open interval $(\tarr(p),\tarr(q))$. Thus, to show that $|\Rgt[t](3)| = \BO{H}$ it is sufficient to argue that less than $H$ points arrived in
$(\tarr(p),\tarr(q))$. By contradiction, suppose that this is not the case. Then, the assumption that $X$ is $H$-ordered implies that $\tdel(p) < \tdel(q)$. Since $q$ 
was removed from $\Ag$ at time $t' < t$, we have two cases:
\vspace{-0.2cm}
\begin{itemize}
\item
$q$ expired at $t'$ and was removed
at Line~\ref{update-step} of the $\textsc{Update}$ procedure.
Hence,  $\tdel(p) < \tdel(q) = t'<t$, and $p$ cannot be in $\Rgt[t]$. 
\vspace{-0.1cm}
\item
$q$ was removed from $\Ag$ at time $t'$ at Line~\ref{cleanup-step} 
of the \textsc{Cleanup} procedure, since $q$ had
minimum deletion time among $k+2$ attractors. Then, right after the
removal of $q$ from $\Ag$, $t_{\rm min}$ was set to a value $\geq
\tdel(q)$ and all representatives $r$ with $\tdel(r) < \tdel(q) \leq t_{\rm min}$ were removed.  Therefore, since $\tdel(p) < \tdel(q)$,
then $p$ would have been removed as well.
\end{itemize}
\vspace{-0.1cm}
In either case, $p$ could not be in $\Rgt[t]$, which contradicts the choice of $p$.  
\end{proof}

The following theorems state upper bounds for the update and query running times, and for the memory complexity. %

\begin{theorem} \label{thm:upd-6}
The number of operations made by $\textsc{Update}_{6+\eps}(p)$ is $\BO{(\log\Delta/\eps) \min\{(k\!+\!H),|X_{\tarr(p)}|\}}$.
\end{theorem}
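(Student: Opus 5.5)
The bound follows by charging every step of $\textsc{Update}_{6+\eps}(p)$ to the current size of the per-guess data structure, and then applying Lemma~\ref{lem:space-bound-6}. Since $|\Gamma| = \BO{\log_{1+\beta}\Delta} = \BO{\log\Delta/\eps}$ with $\beta=\eps/6$, it suffices to show that the work for one guess $\gamma$ is $\BO{|\Agt[t']| + |\Rgt[t']| + |\Agt[t]|+|\Rgt[t]|}$. Here $t'$ is the previous arrival time and $t=\tarr(p)$. Lemma~\ref{lem:space-bound-6} bounds both sizes by $\BO{\min\{k+H,|X_{t'}|\}}$ and $\BO{\min\{k+H,|X_{t}|\}}$ respectively.

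For a fixed $\gamma$, I would go through Algorithm~\ref{alg:update} line by line.

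\begin{enumerate}
\item The removal of expired points at Line~\ref{update-step} is done by a linear scan of $\Ag$ and $\Rg$. Together with the bookkeeping that moves representatives of deleted attractors into $R^\gamma_{\rm orph}$, this costs $\BO{|\Agt[t']|+|\Rgt[t']|}$.
\item Computing $E$ and $E'$ takes $\BO{|\Ag|} = \BO{k}$ distance evaluations. This requires each attractor to store a pointer to its representative, so that $\tdel(\repg(a))$ is read in $\BO{1}$ time.
\item Swapping a representative takes $\BO{1}$ time if $\Rg$ is kept as a doubly-linked list with pointers from attractors.
\item $\textsc{Cleanup}$ finds $a_{\min}$ and $t_{\min}$ by scanning $\BO{k}$ attractors. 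It then removes representatives with $\tdel(q)<t_{\min}$ by one scan of $\Rg$, which costs $\BO{|\Rg|}$.
\end{enumerate}

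The delicate point is the first step, because the sizes \emph{before} the removal of expired points are those at time $t'$, i.e.\ sizes relative to $X_{t'}$ rather than $X_t$. The bound in terms of $k+H$ is unaffected, since Lemma~\ref{lem:space-bound-6} gives $\BO{k+H}$ at every time. For the $|X_{\tarr(p)}|$ term I would argue as follows. Every point stored at time $t'$ is either still active at time $t$ or expires in this step. Each expired point is removed exactly once, so its cost can be charged to its own deletion. If a worst-case statement is required, I would instead interpret $|X_{\tarr(p)}|$ as the active set seen just before the update, noting that $|X_{t'}|$ and $|X_t|$ differ only by the deletions processed in this call.

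An alternative for the first step is to keep, per $\gamma$, the points of $\Ag\cup\Rg$ in a structure ordered by deletion time. Expired points would then be popped from the front in time proportional to their number plus one, and each such pop is a deletion already accounted for. I expect this accounting of expirations to be the main obstacle. Everything else is a routine scan of a structure of size $\BO{\min\{k+H,|X_t|\}}$. Summing over $\gamma\in\Gamma$ gives the stated bound of $\BO{(\log\Delta/\eps)\min\{k+H,|X_{\tarr(p)}|\}}$.
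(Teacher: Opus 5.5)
Your proposal is correct and follows the paper's proof: per guess, the removal of expired points, the construction of $E$ and $E'$, and \textsc{Cleanup} are each linear scans costing $\BO{|\Ag|+|\Rg|}$, the representative swap is $\BO{1}$, and Lemma~\ref{lem:space-bound-6} together with $|\Gamma|=\BO{\log\Delta/\eps}$ gives the bound. The subtlety you flag is genuine and is glossed over in the paper, which applies Lemma~\ref{lem:space-bound-6} at time $\tarr(p)$ to the structure as it stands before expired points are removed. Your reading is the accurate one: the $k+H$ term is unaffected, while the $|X_{\tarr(p)}|$ term holds only amortized over expirations or with respect to the pre-update active set.
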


\begin{theorem} \label{thm:query-6}
The number of operations made by $\textsc{Query}_{6+\eps}(t)$ is 
$\BO{(\log\Delta/\eps) \cdot k  \cdot \min\{(k+H),|X_t|\}}$.
\end{theorem}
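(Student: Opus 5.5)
The plan is to bound the cost of $\textsc{Query}_{6+\eps}(t)$ guess by guess and then sum over the $|\Gamma| = \BO{\log\Delta/\eps}$ guesses. Write $M = \min\{(k+H),|X_t|\}$. By Lemma~\ref{lem:space-bound-6}, for every $\gamma \in \Gamma$ the structures $\Agt[t]$ and $\Rgt[t]$ together hold $\BO{M}$ points. This holds both before and after the first loop of the query, since that loop only removes points.

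The query has two phases. In the first phase, for each $\gamma$, we scan $\Ag$ and $\Rg$ and discard every point $p$ with $\tdel(p)\le t$. A linear scan suffices, so this costs $\BO{M}$ per guess and $\BO{(\log\Delta/\eps) M}$ overall, which is within the claimed bound.

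In the second phase we examine guesses in increasing order. For each $\gamma$ we first check whether $|\Ag|\le k$, which takes $\BO{1}$ time with a counter. If so, we run the greedy $2\gamma$-covering of \citet{HochbaumS86} on $\Rg$. Greedy repeatedly selects an uncovered point of $\Rg$ as a center and then marks as covered every point of $\Rg$ within distance $2\gamma$ of it. We stop as soon as more than $k$ centers would be needed. This early abort is the key step. Greedy therefore opens at most $k+1$ centers, and each opening costs one pass over $\Rg$, i.e.\ $\BO{M}$ distance computations. A single greedy attempt thus costs $\BO{k M}$.

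In the worst case the search reaches every guess in $\Gamma$ before succeeding. That gives $\BO{(\log\Delta/\eps)\, k\, M}$ for the second phase. Adding the $\BO{(\log\Delta/\eps)M}$ of the first phase yields the stated bound. The only point needing care is the early termination of greedy. Without it, greedy on $\Rg$ could open up to $|\Rg|$ centers and cost $\BO{M^2}$ per guess, which exceeds the claimed bound. So the argument must rely on aborting at $k+1$ centers.
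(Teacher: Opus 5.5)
Your decomposition is the same as the paper's. You scan to discard expired points, then for each guess you run the greedy $2\gamma$-covering of $\Rg$, aborted after $k+1$ centers at cost $\BO{k|\Rg|}$. Sizes are controlled by Lemma~\ref{lem:space-bound-6}, and summing over guesses contributes a factor $|\Gamma|=\BO{\log\Delta/\eps}$. The greedy phase is handled correctly, because after the removal loop every stored point is active and the lemma applies to exactly that state.

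The delicate step is not the early abort, which the pseudocode already contains, but the first phase. The removal loop scans the structures as left by the previous \textsc{Update} or \textsc{Query}, at some time $t'<t$. Lemma~\ref{lem:space-bound-6} bounds that state only by $\BO{\min\{k+H,|X_{t'}|\}}$. Your remark that the loop ``only removes points'' transfers a bound from the pre-removal state to the post-removal one, not the other way round. You still get $\BO{k+H}$ per guess, so the claim holds whenever $k|X_t|\ge k+H$. The $|X_t|$ branch of the minimum, however, is not established, and with a scan-based removal it is false.

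Here is a counterexample. Fix $k=1$ and a guess $\gamma\in\Gamma$, and insert $a_1,r_1,a_2,r_2,\dots,a_m,r_m,z$ in this order, where:
\begin{itemize}
\item each $a_i$ expires exactly when $a_{i+1}$ arrives, and $a_m$ expires exactly when $z$ arrives;
\item $d(r_i,a_i)=\gamma$, and all other distances are $3\gamma$, so $\Delta=3$;
\item every $r_i$ expires after $\tarr(z)$, and $z$ is long-lived.
\end{itemize}
Then $|\Ag|\le 1$ throughout, so \textsc{Cleanup} never acts. Each $r_i$ replaces $a_i$ as its representative, so $r_1,\dots,r_m$ pile up as orphan representatives. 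Now let all $r_i$ expire with no further arrivals, and query. The input has only $2m+1$ points, so it is trivially $2m$-ordered, and $|X_t|=1$. Yet the removal loop scans $\Theta(m)$ points, while the claimed bound is $\BO{1/\eps}$.

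The paper's proof glosses over exactly the same point. What the argument actually yields is $\BO{(\log\Delta/\eps)(k+H+k\min\{k+H,|X_t|\})}$ in the worst case. Alternatively, you get the stated bound in amortized form, by charging each discarded point to its own arrival: a point enters $\Ag$ and $\Rg$ only when it arrives, and leaves them at most once.
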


\begin{theorem} \label{thm:mem-6}
The maximum number of points maintained
in memory when processing the pointset $X$ is 
$\BO{(\log\Delta/\eps) \min\{(k+H), \max_t|X_t|\}}$. 
\end{theorem}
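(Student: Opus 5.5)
The plan is to reduce Theorem~\ref{thm:mem-6} directly to Lemma~\ref{lem:space-bound-6}. The algorithm keeps, in memory, only the sets $\Ag$ and $\Rg$ for every guess $\gamma \in \Gamma$. Besides these, it holds $O(1)$ auxiliary values per guess, such as $t_{\rm min}$ and the pointers $\repg(\cdot)$ linking attractors to their representatives. It also holds the current arriving point $p$. Unlike the algorithm of Section~\ref{sec:2approx}, it keeps no priority queue over all active points: expired points are found by scanning $\Ag$ and $\Rg$ themselves (Line~\ref{update-step} and the first loop of $\textsc{Query}_{6+\eps}$). The same holds for $\textsc{Query}_{6+\eps}$: the greedy $2\gamma$-covering of \citet{HochbaumS86} runs on $\Rg$ and produces at most $k$ centers chosen from $\Rg$. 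So the working space is $O(k)$ extra points plus the stored sets. I will state these facts explicitly as the first step.

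The second step is to bound, for a fixed time $t$, the number of stored points by $\sum_{\gamma\in\Gamma}(|\Agt[t]|+|\Rgt[t]|)+O(1)$. Lemma~\ref{lem:space-bound-6} gives $O(\min\{k+H,|X_t|\})$ for each summand. Since $|\Gamma| = O(\log_{1+\beta}\Delta) = O(\log\Delta/\eps)$ with $\beta=\eps/6$, the total at time $t$ is $O((\log\Delta/\eps)\min\{k+H,|X_t|\})$.

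The third step is to take the maximum over $t$. For this I use $\max_t \min\{k+H,|X_t|\} \le \min\{k+H,\max_t|X_t|\}$, and the theorem follows.

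The only delicate point is the timing of the snapshots. Lemma~\ref{lem:space-bound-6} refers to states after each update, so I must check that memory does not exceed this bound at intermediate moments inside a call. Before Line~\ref{update-step} the structure still contains points that expired since the previous call. During $\textsc{Update}_{6+\eps}$ the set $\Ag$ can briefly contain $k+2$ points, and the new $p$ is added before \textsc{Cleanup} runs. I would handle this by noting three facts. First, the set of points stored just before processing $p$ is the state after the previous update, which is bounded at that earlier time. Second, each call adds only $O(1)$ points per guess before removals happen. Third, the $|X_t|$ branch still holds, since every stored point was active at some moment of the call, so the count is at most $|X_{t'}|+1$ for the relevant time $t'$. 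I expect this bookkeeping to be the main (though minor) obstacle. The rest is a direct invocation of Lemma~\ref{lem:space-bound-6}.
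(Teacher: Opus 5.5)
Your proposal is correct and is essentially the paper's proof, which applies Lemma~\ref{lem:space-bound-6} to each guess and multiplies by $|\Gamma|=O(\log\Delta/\eps)$. Your treatment of intermediate states within a call and of the maximum over $t$ only makes explicit what the paper leaves implicit. One minor slip: the $\repg(\cdot)$ pointers number $O(k)$ per guess, not $O(1)$, but they are not points and stay within the bound.
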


Note that, for $H \in O(k)$, we have that update time and memory complexity are bounded by 
$\BO{(\log\Delta/\eps)k}$.

\subsection{Comparison with related work} \label{sec:comparison-6}

The memory and time complexities of algorithm presented in this section depend on the tameness of the update sequences. Importantly, the memory usage never exceeds the one of the algorithm presented in Section~\ref{sec:2approx}, and,
for small $H$ (e.g., $H \in O(k)$), it features significantly smaller memory requirements and comparable time complexity for updates, with the benefit that the bound on the operations is \textit{worst-case} rather than amortized.
These gains, however, come at the cost of a (slightly) worse approximation.

When focusing on sublinear memory, our setting generalizes the sliding window model. Indeed, our algorithm extends the sliding window one of \citet{Cohen-AddadSS16} by allowing arbitrary deletions. Despite the increase in generality, we match the same memory and time complexities when $H=0$. 
The recent preprint by \citet{blank2025general} also addresses a similar generalized setting %
while focusing primarily on the memory requirements, as common in the streaming setting. %
The paper presents a $(6k+2+\eps)$-approximation algorithm for $k$-center that stores $\BO{(\log\Delta/\eps) \cdot k^2}$ points regardless of $H$-ordering, thus being more memory-efficient than our algorithm for pointsets with large $H$. 
However, it comes with a great cost in terms of the approximation factor, which grows linearly with $k$ instead of being constant.

\section{Conclusions}
\label{sec:concl}

In this paper, we explored the dynamic setting \textit{with lifetimes}, offering a trade-off between the fully dynamic and the sliding window model, and devised constant-approximation algorithms for $k$-center clustering in this setting.
The new setting offers several opportunities for future work. It is interesting to study variants of $k$-center problem dealing with outliers \citep{biabani2024improved} or minimizing recourse \citep{lkacki2024fully}. Also,  
the dynamic setting with lifetimes can be applied to other problems, including clustering with different objectives \citep{bhattacharya2024fully}, graph problems \citep{cruciani2024dynamic}, and machine unlearning problems \citep{bourtoule2021machine, bressan2023fully}.

\newpage

\section*{Impact Statement}

This paper presents work whose goal is to advance the field of Machine
Learning. There are many potential societal consequences of our work, none
which we feel must be specifically highlighted here.

\bibliography{references}
\bibliographystyle{icml2026}

\newpage
\appendix
\onecolumn

{\LARGE \textbf{Appendix}}

\section{Comparison with settings akin to the one \textit{with lifetimes}} \label{app:akin}

Dynamic settings akin to the one \textit{with lifetimes} introduced here have recently been proposed in the literature. 

\citet{PengR22} address the \textit{deletions-look-ahead} dynamic model, where the order of deletions of elements of the active set is known to the algorithm, like in the dynamic setting with lifetimes. They show that any incremental algorithm with worst-case update time $T_u$ and query time $T_q$ can be turned into an algorithm that can support both insertions and deletions in the deletions-look-ahead model with $O(T_u \log|X|)$ amortized update time, $O(T_q)$ query time, and $\Theta(\log|X|)$ memory. Crucially, the total number of elements that are inserted or deleted from the active set, i.e., $|X|$, can be exponentially bigger than the number of active elements at any point in time, i.e., $\max_t |X_t|$. Because of this, the proposed technique would fail the requirement of the dynamic setting with lifetimes asking update times sublinear in $|X_t|$. 

The recent preprint of \cite{blank2025general} introduces the \textit{general expiration streaming model}, which generalizes the sliding window model by allowing for arbitrary deletion orders, yet elements are endowed with their deletion time, like in the dynamic setting with lifetimes. Since this model stems from the streaming one, the focus seems to be primarily on sublinear memory usage. For a comparison with their algorithm, see Section~\ref{sec:comparison-6}.

Finally, several settings have been introduced to inform dynamic algorithms with predicted or uncertain (rather than exact) deletion times of elements, inserting these works in the broader \textit{algorithms with predictions} framework \citep{mitzenmacher2022algorithms}.
Among these, \citet{LiuV24} utilizes the \textit{predicted-update dynamic model}, extending the setting of \citet{PengR22} to allow for predicted deletion times, and show that also under this setting some incremental or offline algorithms can be turned into fully-dynamic ones, with $O({\rm polylog}|X|)$ overhead. 
Moreover, both \citet{brand2024dynamic} and \citet{HenzingerSSY24} endow fully dynamic algorithms with predicted deletion times of points, in a setting akin to the predicted-update dynamic model of \citet{LiuV24}, but focus on graph problems.

\section{Limitations and future work}

In this paper, we developed two algorithms for k-center in the newly introduced dynamic setting \textit{with lifetimes}. 
While the introduction of this setting and the new algorithms offer several advantages over algorithms for the fully dynamic or the sliding window setting (in terms of efficiency and generality, respectively), we envision that future work can address several limitations of the present work. 

First, developing lower bounds on the trade-offs between approximation guarantees and time complexities for k-center in the dynamic setting with lifetimes, and understanding how they differ from the ones in the fully dynamic setting (e.g., the ones presented in \citet{bateni2023optimal}), would be a major step forward for the field. Additionally, it would be interesting to understand how these lower bounds change under the requirement of sublinear working memory (i.e., making the dynamic setting with lifetimes analogous to the general expiration streaming model \citep{blank2025general}), and how they relate to the ones in the sliding window model \citep{Cohen-AddadSS16}.

Moreover, as for the fully dynamic setting, the dynamic setting with lifetimes allows for a definition of \textit{recourse}, defined as the number of changes (i.e., $C_1 \triangle C_2$) made to
the cluster centers after each point insertion or deletion \citep{lattanzi2017consistent}. Both the algorithms we present here have trivial $O(k)$ recourse, and developing algorithms with $o(k)$ recourse is an interesting research direction. In the fully dynamic setting, \citet{forster2025dynamic} developed an algorithm that yields a $50$-approximation with optimal recourse of $2$ and $\widetilde{O}({\rm poly}(|X|))$ update time, while \citet{bhattacharyaalmost} developed an algorithm that yields a $20$-approximation with $O(1)$ recourse and $\widetilde{O}(k)$ update time. Understanding how these trade-offs can be improved in the dynamic setting with lifetimes would be a major development. 

Finally, several extensions to the k-center problem can be tackled in the dynamic setting with lifetimes. For example, the algorithms could be extended to deal with variants of the problem, including dealing with outliers \citep{biabani2024improved}, diversity maximization \citep{WangFML23}, and fairness constraints \citep{CeccarelloPPV25}.
Moreover, techniques for low-dimensional metric spaces could be used to improve the approximation guarantees under memory constraints \citep{PellizzoniPP22} or to obtain $O(1)$ update times \citep{gan2024fully, pellizzoni2023fully}.

\section{Additional pseudocodes}

We provide here, for the sake of completeness, the pseudocodes for the query procedures. 
In particular, Algorithm~\ref{alg:query-2} reports the query procedure for the algorithm presented in Section~\ref{sec:2approx}.

\begin{algorithm2e}[h!] 
\SetAlgoLined
$\gamma^* \leftarrow \min \{\gamma \in \Gamma \; : \; U^\gamma = \emptyset\}$\;
\Return $C^{\gamma^*}$
\caption{$\textsc{Query}_{2+\eps}(t)$} \label{alg:query-2}
\end{algorithm2e}

Moreover, Algorithm~\ref{alg:query-6} reports the query procedure for the algorithm presented in Section~\ref{sec:6approx}.

\begin{algorithm2e}[h!]
\SetAlgoVlined
\LinesNumbered
\ForEach{$\gamma \in \Gamma$}{
Remove points $x$ of $\Ag$ and $\Rg$ with $\tdel(x) \!\leq\! t\!$ \\
}
\For{increasing values of $\gamma \in \Gamma$ such that $|\Ag| \le k$}{
    \Let{$C$}{$\emptyset$}\;
    \ForEach{$q\in \Rg$}{
        \lIf{$C = \emptyset \text{ or }  d(q, C) > 2\gamma$}{
            \Let{$C$}{$C \cup \{q\}$}
        }
        \lIf{$|C| > k$}{go to next $\gamma$}
    }
    \Return $C$
        
}
\caption{$\textsc{Query}_{6+\eps}(t)$}
\label{alg:query-6}
\end{algorithm2e}

\section{Omitted Proofs}

\subsection{Proofs for Section~\ref{sec:2approx}}

\begin{proof}[Proof of Remark~\ref{remark:sequence}]
We describe an adversarial point set for a given guess $\gamma \in \Gamma$. At time $t=1$, we insert a point $p_1$ with lifetime such that $\tdel(p_1) = \tarr(p_1)+n+1$.

At time $t=2, \dots, n$, we insert $n-1$ points $p_2, \dots, p_{n}$, each with lifetime such that $\tdel(p) = \tarr(p) + 2n-1$, and such that each $p_i$ with $i>1$ is at distance $\nicefrac{3}{2}\gamma$ from $p_1$ (and, e.g., at distances $\nicefrac{\gamma}{10}$ between each other). Clearly, all these points are assigned to cluster $\Clu_1$, with the first point $p_1$ being the center.
Then, at time $t=n+1, \dots, 2n$, we insert $n$ additional points $p_{n+1}, \dots, p_{2n}$, each with lifetime such that $\tdel(p_i) = \tarr(p_i) + 2$ and such that each such $p_i$ is at distance $\nicefrac{5}{2}\gamma$ from $p_1$, at distance $\nicefrac{5}{2}\gamma$ from the other points $p_j$ with $n < j \leq 2n$, and at distance $\nicefrac{3}{2}\gamma$ from points $p_2, \dots, p_n$. 

The first $n$ calls of $\textsc{Update}_{2+\eps}$ involve no deletions, so they take $O(k)$ time each. The call at time $n+1$ inserts $p_{n+1}$ as a new center, since it is at distance $> 2\gamma$ from $p_1$, in $O(k)$ time. 
However, the call at time $n+2$ first involves the deletion of $p_1$ and moving $p_2, \dots, p_{n}$ to the cluster with center $p_{n+1}$, and then inserts $p_{n+2}$ as a new center. This takes $\Theta(n)$ time. 
Each call at time $t = n+3, \dots, 2n$ involves the deletion of point $p_{t - 2}$, moving $p_2, \dots, p_{n}$ to the cluster with center $p_{t-1}$ and the insertion of $p_{t}$ as a new center, which takes $\Theta(n)$. 
Therefore, the algorithm performs a total of $\Theta(n^2)$ operations.
\end{proof}

\begin{proof}[Proof of Lemma~\ref{lem:invariant}]
    At step $t=0$, the property holds by vacuity. 
    Suppose inductively that at step $t \geq 0$, the invariant holds for all points in $X_t$. We recall that the superscript (e.g. $U^{\gamma, t}$) denotes the state of a set \textit{after} the $t$-th insert or delete operation. When the superscript is missing, it denotes a state during the execution of the operations. At time step $t+1$:

    \begin{itemize}
        \item  If $\textsc{Insert}(p, \gamma)$ is called, then all points in $X_t$ are left untouched. Moreover, the invariant for $p$ holds by construction. 
        \item If $\textsc{Delete}(p, \gamma)$ is called and $p$ is not a center, then all points in $X_t \setminus \{  p\}$ are left untouched. 
    Let instead $p$ be a center $c^{\gamma, t}_i$.
    Then, for each $x \not\in \Clu^{\gamma, t}_i \cup U^{\gamma, t}$, note that the center and cluster index renaming does not affect the invariant. 
    For each $x \in \Clu_i^{\gamma, t}$, either $x$ is inserted in $\Clu_h$ with $h = \argmin_{j=i+1, \dots \ell}\{ c_j \in C \colon \ d(x, c_j) \leq 2\gamma \}$ (possibly by creating new clusters), or it is inserted in $U$ if no center at distance $\le2\gamma$ is found.
    In the first case, note that $\min_{j = 1, \dots, i } d(x, c_j) > 2 \gamma$, otherwise by the inductive hypothesis $x$ would not belong to $\Clu_i$. Therefore, we have that $h = \argmin_{j=1, \dots \ell}\{ c_j \in C \colon \ d(x, c_j) \leq 2\gamma \}$, thus fulfilling the invariant. 
    
    Finally, let $U \neq \emptyset$ and consider a $x \in U$, either because it belonged in $U^{\gamma, t}$ or because it was inserted in $U$ on line~\ref{line:delete-insU}. In this case, one point $u$ will be selected as a new center, and thus $|C| = k$.
    We have that if $d(x, C^{\gamma, t+1}) > 2\gamma$, then we have $d(x, u) > 2\gamma$ and therefore $x$ remains in $U$, satisfying the first case of the invariant. Otherwise, since $x\in U$ implies that $d(x, c_i) > 2\gamma \ \forall i=1, \dots, \ell-1$, we have that $d(x, u) \leq 2\gamma$, and thus $x$ is inserted in $\Clu_l$, thus satisfying the second case of the invariant. 
    \end{itemize}
    Finally, after a call to \textsc{Insert} or \textsc{Delete}, the $\textsc{Reclustering}(i, \gamma)$ procedure might be called. 
    In that case, we note that each $x \in \bigcup_{j=1}^{i-1} \Clu_j^{\gamma, t}$ is left untouched, and the invariant is still valid. 
    On the other hand, if $x \in U'$ after line \ref{line:reclustering_cluster_joining}, then it could not be inserted in any of the clusters $\Clu_j$ for $j = 1, \dots, i-1$, by the inductive hypothesis.
    Moreover, by construction it will be inserted in the cluster $\Clu_h$ with $h = \argmin_{j=i, \dots \ell}\{ c_j \in C \colon \ d(x, c_j) \leq 2\gamma \}$, as a point can be inserted in a cluster only if it is not inserted in a cluster with a lower index. At the end of the loop (i.e., $|C|=k$), any point $x$ such that $d(x, C) > 2\gamma$ is left in $U$. 

    We therefore have the claim by induction.
\end{proof}

\begin{lemma}\label{lem:reclustering_makes_all_vanishing}
After a call of $\textsc{Reclustering}(\gamma)$ every point that was reclustered, i.e., every point belonging to the set $U'$

in $\bigcup_{j=i}^l \Clu^\gamma_j \ \cup U^\gamma$ before the reclustering, becomes vanishing for $\gamma$.
\end{lemma}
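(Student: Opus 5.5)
The plan is to examine how the greedy loop of $\textsc{Reclustering}(\gamma)$ assigns each point of $U'$. Every such point ends up in exactly one of two places when the procedure finishes. It is either in $U^\gamma$, in which case it is vanishing by definition, since unclustered points are vanishing. Or it belongs to some newly created cluster $\Clu_j^\gamma$ with $j \ge i$, either as its center $c_j^\gamma$ or as a non-center member. We treat these cases one at a time.

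\emph{Key step (non-center members).} Fix an iteration $j$ of the \textbf{for} loop. At that moment the center $u = c_j^\gamma$ is chosen as a point of the current $U'$ with maximal deletion time. Every point $x$ placed in $\Clu_j^\gamma$ during the same iteration is taken from that same $U'$, and $x$ is still in $U'$ when $u$ is selected, because removals from $U'$ only happen after $u$ is picked and only in this iteration. So $\tdel(x) \le \tdel(u)$, which is exactly the condition $x \in V_j^\gamma$, i.e., $x$ is vanishing.

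\emph{Centers.} A center belongs to its own cluster and satisfies $\tdel(c_j^\gamma) \ge \tdel(c_j^\gamma)$, so it lies in $V_j^\gamma$ and is vanishing as well, matching the convention under which the center counts as a vanishing point of its own cluster. Points left in $U'$ when the loop stops (after $j=k$, or because $U'$ empties) are assigned to $U^\gamma$ and are therefore vanishing.

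\emph{Subtle point.} One must check that no point of $U'$ is silently dropped. Before line~\ref{line:reclustering_cluster_joining} the old clusters with index $\ge i$ are removed entirely, and every point of the union is placed in $U'$. The loop then only moves points from $U'$ into new clusters, and the leftovers become $U^\gamma$. Hence the state right after reclustering partitions the reclustered points into new vanishing-only clusters and $U^\gamma$. Ties in $\tdel$ cause no difficulty, because the definition of $V_j^\gamma$ uses the non-strict inequality $\ge$. I expect no real obstacle beyond stating this bookkeeping carefully.
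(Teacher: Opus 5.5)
Your proof is correct and follows the same argument as the paper: each new center $c_j^\gamma$ is chosen with maximal deletion time among the points still in $U'$, so every point then placed in its cluster satisfies $\tdel(x)\le\tdel(c_j^\gamma)$, and points left in $U^\gamma$ are vanishing by definition. Your version is slightly more careful than the paper's in two places: you use the non-strict inequality, which matches the definition of $V_j^\gamma$ (deletion times need not be distinct, whereas the paper states a strict $<$), and you treat the centers themselves explicitly.
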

\begin{proof}[Proof of Lemma~\ref{lem:reclustering_makes_all_vanishing}]
    Since at each step $j$ the new center $c_j$ gets chosen from all remaining points as the one with maximal deletion time, any point $x$ that gets inserted into $\Clu_j$ must have deletion time $t_{\rm del}(x) < t_{\rm del}(c_j)$, and is therefore vanishing. If a point is left in $U$, then it is vanishing by definition.
\end{proof}

\begin{proof} [Proof of Lemma~\ref{lem:remain_vanishing}]
    We will show that the vanishing property (for a fixed guess $\gamma$) is invariant under the call of $\textsc{Insert}$, $\textsc{Delete}$ or $\textsc{Reclustering}$ algorithms, unless $t=\tdel(x)$ and the point is thus removed. 
    The property holds at $t=0$ by vacuity and the proof follows by induction.

    Fix some $x\in X$ which is vanishing at time $t$.
    \begin{itemize}
        \item  \textbf{Insertion}: If $x$ is vanishing then it is already inserted, so any insertion operation will regard a new point and cannot modify either $x$ or its center. Then, $x$ will remain vanishing after any insertion operation.

        \item \textbf{Deletion}: Suppose that we call $\textsc{Delete}(p, \gamma)$. If $p=x$ the current time is $\tdel(x)$ and the proposition holds. 
        Furthermore, if $p\notin C^\gamma$ (i.e. $p$ is not a center) its deletion will only alter $p$ itself and no other point. 
        Therefore, we can assume $p\neq x$ and $\exists i$ s.t. $p=c_i$.
        We then have 3 cases:
        \begin{enumerate}
            \item $x \in \Clu_i$. This would mean that the center of $x$ has a deletion time smaller than that of $x$, which is in contradiction to the fact that $x$ is vanishing.
            \item $x \in \Clu_j$, for some $j \neq i$. In the case that $p=c_i$ the deletion algorithm affects clusters with index different from $i$ by possibly adding more points to them (Line \ref{line:del_insert_in_C_j}) and by relabeling their indexes (Line \ref{line:del_relabeling}). Therefore, no changes are made to $x$ or to its center $c_j$, and thus $x$ remains vanishing.
            \item $x \in U$. In this case, $x$ remains in $U$ or is inserted in $\Clu_l$ (Line \ref{line:del_insert_in_C_l}). Either way, it remains vanishing, since we picked $c_l = u$ s.t. $\tdel(u)$ is maximal among all points in $U$.
        \end{enumerate} 

        \item \textbf{Reclustering}: Suppose that we call $\textsc{Reclustering}(i,\gamma)$. If $\exists j<i$ such that $x\in \Clu_j$, then the reclustering operation does not affect $\Clu_j$ in any way, meaning that $x$ remains vanishing. The other case is covered by Lemma \ref{lem:reclustering_makes_all_vanishing}, and thus $x$ remains vanishing.
    \end{itemize}
\end{proof}

\begin{proof} [Proof of Proposition~\ref{prop:reclustering-bound}]
    The main for loop of Algorithm \ref{alg:reclustering} is repeated $O(k)$ times. In each iteration we need to scan $U'$ linearly to first find the element with maximal $\tdel$ and then to check whether some other point in $U'$ is close to it. So the main loop is executed in $O(k\cdot|U'|)$. The other parts of the algorithm can be executed in $O(|U'|)$, for a total running time of $O(k\cdot|U'|)$. Then:

\[
    k|U'|=
    k\left|U\cup\bigcup_{j=i}^k \Clu_j\right| \le 
    k\left(|U| + \sum_{j=i}^k \left(|P_j| + |V_j|\right)\right) \le 
    2k\sum_{j=i}^k |P_j|,
\]
    where the last inequality is guaranteed by the fact that Property (\ref{eq:size_balance}) holds for $i$ when calling the reclustering.

    This means that the number of operations is $O\left(k\cdot\sum_{j=i}^l|P_j|\right)$ and thus bounded by the number of persistent points involved in the reclustering.  
    Given that by Lemma \ref{lem:reclustering_makes_all_vanishing} all persistent points in $P_{j\ge i}$ will become vanishing and that by Lemma~\ref{lem:remain_vanishing} every vanishing point remains vanishing, the number of persistent points in all reclusterings is bounded by $n$ and thus the total number of operations made by all calls of $\textsc{Reclustering}(i,\gamma)$ is $O(n\cdot k)$
\end{proof}

\begin{proof}[Proof of Lemma~\ref{lem:center-deletion-bound}]
    We will show that for $c\neq x$ to satisfy the two properties implies the fact that when $c$ was last picked as a center $x$ was either not inserted or persistent. 
    Suppose by contradiction that, when $c$ was last picked as a center, $x$ was vanishing. We have 3 possibilities:
    \begin{enumerate}
        \item $c$ became a center for the last time at time $t'$ during an insertion operation (Line \ref{line:insert-center-addition}) or during a re-assignment of persistent points after a deletion (Line \ref{line:delete-center-addition}). In both of these cases $c^{\gamma, t'}_{ID}(c)> c^{\gamma, t'}_{ID}(x)$ as $c$ becomes the center of the cluster with the highest index. Given that we are assuming that $t'$ is the last time that $c$ was chosen as a center, there cannot be any reclustering involving $c$ after time $t'$. Furthermore, if no reclustering is made, clusters stay in their relative order and point $x$ will remain inside its cluster until its deletion. So the fact that $c^{\gamma, t'}_{ID}(c)\ge c^{\gamma, t'}_{ID}(x)$ is in contradiction with $c^{\gamma, t}_{ID}(c)\le c^{\gamma, t}_{ID}(x)$.
        \item $c$ became a center for the last time after a deletion. If $x\notin U$ the analysis is akin to the one in the first case. If $x\in U$ we know that $\tdel(x)<\tdel(c)$, thus at time $t$ $x$ will already be deleted and therefore cannot be vanishing.
        \item $c$ became a center for the last time during the execution of $\textsc{Reclustering}(i,\gamma)$. If $x\in \Clu_j$ s.t. $j<i$ the analysis is akin to the one in the first case. Let instead $x\in U'$ (same $U'$ as in Algorithm \ref{alg:reclustering}). If $x$ ends up in a cluster with smaller index than the one of $c$, then we break the assumption that $c^{\gamma, t}_{ID}(c)\le c^{\gamma, t}_{ID}(x)$. If instead it ends up in a cluster with index greater or equal to the one of $c$, then we must have $\tdel(x) <  \tdel(c)$, and thus $x$ cannot be vanishing at $t = \tdel(c)-1$ because it has been already deleted.
    \end{enumerate}

    Given that by Lemma~\ref{lem:remain_vanishing} a vanishing point remains vanishing, calling $t'$ the time at which $x$ becomes vanishing, every center $c$ which satisfies the two properties of the Lemma must be picked at time $<t'$ and carried on up to a time $\ge t'$. Given that at most $k$ centers can coexist simultaneously, the maximum number of centers $c\neq x$ satisfying the properties is $k$.  Accounting for the possibility that $c=x$, we obtain at most $k+1$ centers.
\end{proof}

\begin{proof}[Proof of Theorem~\ref{thm:query-2}]
    Note that we iterate through the guesses $\gamma \in \Gamma$, and for each guess we check in constant time the size of $C^\gamma$. This takes $O(|\Gamma|) = O(\log\Delta/\eps)$ time. Finally, returning the solution takes $O(|C^{\gamma^*}|) = O(k)$ time.
\end{proof}

\begin{proof}[Proof of Theorem~\ref{thm:mem-2}]
    Note that, by construction, after the call of $\textsc{Update}(p, t)$, only points of $X_t$ are maintained in memory. Moreover, for a fixed guess $\gamma$, any such point $p \in X_t$ can belong, by construction, to only one of the sets $\Clu^\gamma_i, i =1, \dots, \ell^\gamma$ or $U^\gamma$. We therefore have that $\sum_{i=1}^{\ell^\gamma} |\Clu^\gamma_i| + |U^\gamma| \leq |X_t|$. Moreover, clearly $|C^\gamma| \leq |X_t|$ as the centers are all distinct. 
    Since we have $|\Gamma| \in O(\log\Delta/\eps)$, the data structure maintains at most $O((\log\Delta/\eps) |X_t|)$ points. Moreover, the priority queue $Q$ contains each point of the active set exactly once, for a total of $|X_t|$ points. 
    Finally, until the following call of $\textsc{Update}$, the number of points stored in the data structures can only decrease (i.e., if $\textsc{Query}$ calls $\textsc{Delete}$), thus concluding the proof.
\end{proof}

\subsection{Proofs for Section~\ref{sec:6approx}}

\begin{proof}[Proof of Lemma~\ref{lem:invariants}]
Let us focus on an arbitrary guess $\gamma$. 
We say that a point $q \in X_t$ is
\emph{relevant for $\gamma$ and $t$} if $|\Agt[t]| \leq k$
or $|\Agt[t]| > k$ and $\tdel(q) \ge \min_{a\in \Agt[t]} \tdel(a)$.
Clearly, the proof concerns only relevant points. 
Also, for every point $q \in X_t$, we let $\psi^{\gamma}(q)$ denote the attractor that $q$ is attracted to. If $q$ is selected as a representative by $\textsc{Update}_{6+\eps}(q)$, we let $\psi^{\gamma}(q)$ be such that $\repg(\psi^{\gamma}(q)) = q$. Otherwise, we let $\psi^{\gamma}(q)$ be any attractor belonging to the set $E$ during the execution of $\textsc{Update}_{6+\eps}(q)$. 
Clearly, $d(q,\psi^{\gamma}(q)) \leq 2 \gamma$.
We will  argue  that for every point $q \in X_t$ which is relevant for $\gamma$ and $t$, there exists a point $r \in \Rgt[t]$ 
such that $\psi^{\gamma}(r) = \psi^{\gamma}(q)$. 

If $q$ is the point arrived at time $t$ (i.e., $q=p$ and $\tarr(q)=t$) then we set $r = rep^{\gamma, t}({\psi^{\gamma}(q)})$, which is possibly $q$ itself. 
If $q$ arrived at an earlier time,  it is immediate to argue that since $q$ is relevant at time $t$, it has also been relevant at all times $t'$, with $\tarr(q)\leq t' <t$, when a point arrived or a query was invoked. 
Set $r_0=rep^{\gamma,\tarr(q)}({\psi^{\gamma}(q)})$.
We have that $\psi^{\gamma}(r_0)=\psi^{\gamma}(q)$. Also, it holds that $\tdel(r_0)\geq \tdel(q)$, therefore $r_0$ is also relevant at all times $t'$, with $\tarr(q)\leq t' <t$. Being relevant, $r_0$  has  never  been removed from $\Rgt[t']$ during some invocation of \textsc{Cleanup}, which only removes non-relevant points. Consequently, since  $r_0$ is in $\Rgt[\tarr(q)]$, either $r_0$ is still in  $\Rgt[t]$ (and we set $r=r_0$), or it has been  expunged from some set $\Rgt[\tarr(r_1)]$,  due to the arrival of some longer-lived representative $r_1$
at time $\tarr(r_1)>\tarr(q)$.  Observe that
$\psi^{\gamma}(r_1)= \psi^{\gamma}(r_0)$, and that $r_1$ is also relevant at time $t$. If $r_1\in \Rgt[t]$, we set $r=r_1$, otherwise, we apply the same reasoning to $r_1$. Iterating the argument, we determine a sequence of relevant points $r_0, r_1, \ldots, r_h$, with $\tarr(q)\leq \tarr(r_0)< \tarr(r_1)\ldots <\tarr(r_h)$ and 
$\psi^\gamma(q)=\psi^\gamma(r_0) = \ldots = \psi^\gamma(r_h)$, with 
$r_h\in \Rgt[t]$. We then set $r=r_h$. 
The lemma follows immediately since $\psi^{\gamma}(r) = \psi^{\gamma}(q)$ implies
$d(q,r) \leq 4 \gamma$.
\end{proof}

\begin{proof}[Proof of Theorem~\ref{thm:approx-6}]
Let $\gamma = (1+\beta)^i \in \Gamma$ be the guess such that the returned solution $C$ is computed from $\Rgt[t]$, for some 
$\lfloor \log_{1+\beta} \dmin \rfloor \leq i \leq 
\lceil \log_{1+\beta} \dmax \rceil$. By construction, 
$|\Agt[t]| \leq k$, and, for each $p \in \Rgt[t]$, 
$d(p,C) \leq 2 \gamma$. Using Lemma~\ref{lem:invariants} and the triangle inequality, we have that
for each $p \in X_t$, $d(p,C) \leq 6 \gamma$. Now, if $i = \lfloor \log_{1+\beta} \dmin \rfloor$ then 
$\gamma \leq r_k^*(X_t)$ and the theorem
holds. Instead, if $i > \lfloor \log_{1+\beta} \dmin \rfloor$, we have, by construction, that for $\gamma'=(1+\beta)^{i-1}=\gamma/(1+\beta)$ either $|A^{\gamma',t}|=k+1$
or $k+1$ points of $R^{\gamma',t}$ have been found at distance $> 2\gamma'$
from one another. From what we observed earlier, in either case we have $\gamma' < r_k^*(X_t)$, hence 
$\gamma < (1+\beta)r_k^*(X_t)$. Thus,  for each $p \in X_t$,
$d(p,C) \leq 6 \gamma < 6 (1+\beta)r_k^*(X_t) \leq (6+\eps) r_k^*(X_t)$,
and the theorem follows.
\end{proof}

\begin{proof}[Proof of Theorem~\ref{thm:upd-6}]
    Consider a $\gamma \in \Gamma$. 
    Removing expired points from $A^\gamma$ and $R^\gamma$ can be done with a linear scan in $O(|A^\gamma|+|R^\gamma|)$, as well as creating the set $E$, and potentially the set $E'$. Adding points to $A^\gamma$ and $R^\gamma$, setting $rep^\gamma(p)$ to $p$, and selecting an arbitrary point of $E'$ can be done in constant time. Finally, procedure $\textsc{Cleanup}$ can be executed in $O(|A^\gamma|+|R^\gamma|)$ time. 
    Recall that, from Lemma~\ref{lem:space-bound-6}, we have that $|A^\gamma|+|R^\gamma| \in \BO{\min\{(k+H),|X_t|\}}$.
    Since, $|\Gamma| \in O(\log\Delta/\eps)$, we have the claim.
\end{proof}

\begin{proof}[Proof of Theorem~\ref{thm:query-6}]
    The removal of expired points takes at most $O(|A^\gamma|+|R^\gamma|)$ time for each of the $|\Gamma| \in O(\log\Delta/\eps)$ guesses. 
    Then, for each $\gamma$, running the procedure of \citet{HochbaumS86} on $R^\gamma$ takes $\BO{k \cdot |R^\gamma|}$ time. 
    Recalling that $|A^\gamma|+|R^\gamma| \in \BO{\min\{(k+H),|X_t|\}}$ by Lemma~\ref{lem:space-bound-6} and that $|\Gamma| \in O(\log\Delta/\eps)$, we have the claim.
\end{proof}

\begin{proof}[Proof of Theorem~\ref{thm:mem-6}]
    The claim follows by bounding the number of points maintained for each guess $\gamma$ via Lemma~\ref{lem:space-bound-6}, and by recalling that $|\Gamma| \in O(\log\Delta/\eps)$.
\end{proof}

\end{document}